\pdfoutput=1
\documentclass[sigconf]{acmart}
\usepackage{xspace}
\usepackage{ifthen}
\usepackage{multirow}
\usepackage{wrapfig}
\usepackage{multicol}
\providecommand{\Description}[2][]{}

\makeatletter
\@ifclassloaded{acmart}{%
  \setlength{\emergencystretch}{2em}%
}{%
  \setlength{\abovecaptionskip}{0pt}%
  \setlength{\belowcaptionskip}{0pt}%
  \g@addto@macro\normalsize{%
    \setlength{\abovedisplayskip}{0pt}%
    \setlength{\abovedisplayshortskip}{0pt}%
    \setlength{\belowdisplayskip}{0pt}%
    \setlength{\belowdisplayshortskip}{0pt}%
  }%
  \setlength{\textfloatsep}{2pt}
  \setlength{\intextsep}{2pt}
}
\makeatother

\usepackage{algorithm}
\usepackage{algpseudocode}

\usepackage{enumitem}

\usepackage{subcaption}

\makeatletter
\@ifclassloaded{acmart}{}{\usepackage[table]{xcolor}}
\makeatother

\newcommand{\problem}{structure-aware table retrieval\xspace}

\newcommand{\PROBLEM}{Structure-aware Table Retrieval\xspace}

\newcommand{\method}{\textsc{GRAFT}\xspace}

\newcommand{\metric}{IGMS\xspace}
\usepackage{listings}

\newcommand{\accu}{\textit{Accuracy}\xspace}

\newcommand{\rmse}{\textit{RMSE}\xspace}

\newcommand{\spider}{\textit{Spider}\xspace}
\newcommand{\bird}{\textit{BIRD}\xspace}
\newcommand{\open}{\textit{Open Data}\xspace}

\usepackage{amsmath}
\usepackage{amsthm}
\makeatletter
\@ifclassloaded{acmart}{}{\usepackage{amssymb}}  
\makeatother

\DeclareMathOperator*{\argmax}{arg\,max}

\newtheorem{theorem}{Theorem}
\newtheorem{proposition}[theorem]{Proposition}

\theoremstyle{definition}

\theoremstyle{remark}

\AtBeginDocument{%
  }

\begin{document}

\title{GRAFT: Graph-Matched Retrieval and Fusion of Tables in Data Lakes}

\author{Daomin Ji}
\email{daomin.ji@student.rmit.edu.au}
\affiliation{%
  \institution{RMIT University}
  \city{Melbourne}
  \country{Australia}}

\author{Hui Luo}
\email{huil@uow.edu.au}
\affiliation{%
  \institution{University of Wollongong}
  \city{Wollongong}
  \country{Australia}}

\author{Zhifeng Bao}
\email{zhifeng.bao@uq.edu.au}
\affiliation{%
  \institution{The University of Queensland}
  \city{Brisbane}
  \country{Australia}}

\author{Shane Culpepper}
\email{s.culpepper@uq.edu.au}
\affiliation{%
  \institution{The University of Queensland}
  \city{Brisbane}
  \country{Australia}}

\author{Shazia Sadiq}
\email{shazia@eecs.uq.edu.au}
\affiliation{%
  \institution{The University of Queensland}
  \city{Brisbane}
  \country{Australia}}

\renewcommand{\shortauthors}{Ji et al.}

\begin{abstract}
Autonomous data agents resolve analytical queries by retrieving and reasoning over evidence in tabular data lakes. Existing methods score tables independently against the query and ignore the joinability and unionability that link them, returning fragmented evidence that downstream agents cannot integrate. We propose \method (\textbf{G}raph-matched \textbf{R}etrieval \textbf{a}nd \textbf{F}usion of \textbf{T}ables), structured around two principal contributions. First, we cast table retrieval as a graph matching problem between a query-derived intent graph and a heterogeneous data lake graph, and introduce \metric, a log-determinant reward that couples semantic relevance, structural compatibility, and evidence diversity in a single objective. Second, we recast subgraph generation as a Markov decision process and learn a value function via implicit Q-learning on self-generated trajectories produced by a canonical compression operator that inverts the homomorphism. We further design a three-stage online pipeline that exploits anchor reachability, predicate admissibility, and reward monotonicity to greatly prune the candidate space before exact \metric evaluation. On Spider and BIRD adapted to the tabular data lake setting, \method achieves the best Recall, Precision, F1, and Sufficiency among point-wise, greedy-expansion, and structure-aware baselines, with relative gains of $7.8\%$ in F1 and $10.6\%$ in Sufficiency over the strongest baseline, while maintaining high search efficiency.
\end{abstract}


\keywords{table retrieval, data lakes, data discovery, graph matching, offline reinforcement learning}

\maketitle

\section{Introduction}\label{sec:intro}

Autonomous data agents~\citep{zhu2025survey,tang2025llm_agent_as_data_analyst} powered by large language models (LLMs) have shown strong proficiency in automating data-centric tasks such as question answering~\citep{lewis2020rag,karpukhin2020dpr,yang2018hotpotqa,Tang2025STRaptor,Zhang2025AixelAsk}, fact checking~\citep{thorne2018fever}, code generation~\citep{chen2021codex,yang2024sweagent}, and mathematical reasoning~\citep{cobbe2021gsm8k,lewkowycz2022minerva}. To resolve a natural language query, these agents first retrieve relevant information from heterogeneous sources and then reason over it~\citep{wang2023llm_agents_survey,xi2023rise_agents,yao2023react,karpas2022mrkl}. Consequently, the quality of downstream reasoning hinges directly on the relevance and completeness of the retrieved content~\citep{lewis2020rag,karpukhin2020dpr}.

While retrieval over unstructured data such as text and images has been extensively studied~\citep{karpukhin2020dpr,khattab2020colbert,radford2021clip,johnson2019billion}, retrieval over tabular data lakes~\citep{wang2023solo,guo2025birdie}, i.e., large repositories of structured tables aggregated from heterogeneous sources, remains comparatively underdeveloped. 
Retrieval over tabular data lakes exhibits two defining characteristics that distinguish it from retrieval over text or image corpora. First, tables within a data lake are typically fragmented across heterogeneous sources, and independent retrieval over them yields disconnected candidates that cannot be integrated into a single answer. Resolving a query therefore requires identifying multiple tables linked through joinability and unionability and synthesizing them into a coherent integrated table, rather than returning isolated candidates. For example, the query in Fig.~\ref{fig: beneficial examples} requires joining the instructor records (in $T_2$, $T_3$) to their department names (in $T_5$) through the (instructor, department) bridge table $T_4$. Without $T_4$, no other combination of retrieved tables can reconstruct the join path needed to filter on the Computer Science department, regardless of how many candidate tables are returned. Second, queries over tabular data are predominantly analytical and frequently involve aggregation operations such as SUM, COUNT, and AVG, whose correctness depends on the completeness of the supporting evidence rather than on the relevance of any single table. The retrieved set must therefore exhibit \emph{diversity}, contributing complementary attributes and instances that jointly cover the analytical population while suppressing redundant evidence units that project to the same underlying content. For example, $T_2$ and $T_3$ in Fig.~\ref{fig: beneficial examples} are unionable tables that record instructors of different roles (Professors in $T_2$; Lecturers and Assistant Professors in $T_3$) at the same set of universities. Retrieving only one of the two undercounts the Computer Science instructors at every university and distorts the COUNT aggregation. Both tables are required, and a retriever that treats them as near-duplicate evidence systematically biases the analytical answer.



Thus, we study the problem \textit{\PROBLEM}: given a natural language query and a tabular data lake, the goal is to identify a set of interconnected tables, structurally linked through joinability and unionability, that collectively satisfy the user's intent. The \textit{\problem} task raises two core challenges:
\textbf{(1) Unified utility design.} As discussed above, retrieval over tabular data lakes introduces desiderata beyond surface relevance, requiring a utility that couples three signals in a single objective: \emph{semantic relevance} between the query and individual evidence units, \emph{structural compatibility} between the relations implied by the query and executable join or union paths in the lake, and \emph{evidence diversity} that suppresses redundancy among selected units.
\textbf{(2) Combinatorial search space.} The search space grows exponentially in the number of tables. In a data lake of $M$ tables with average degree $N$ in the graph induced by joinability and unionability, retrieving a collection of $L$ interconnected tables admits on the order of $\mathcal{O}(M \cdot N^{L-1})$ candidates, which precludes exhaustive enumeration.

\begin{figure*}[t]
    \centering
    \includegraphics[width=\linewidth]{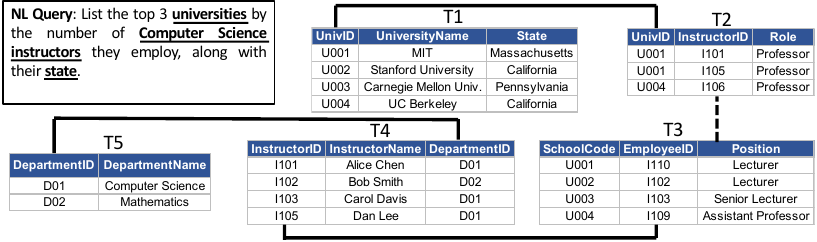}
       \caption{A motivating example of composition-guided table retrieval, where dashed lines denote unionability and solid lines denote joinability. The retrieved tables can jointly answer the query via $T_1 \bowtie (T_2 \cup T_3) \bowtie T_4 \bowtie T_5$. Note that schemas (e.g., column names) of tables in the data lake may not be aligned.}
    \Description{Five tables in a data lake connected by join and union edges, which together answer a natural-language query about Computer Science instructors.}
    \label{fig: beneficial examples}
\end{figure*}

Existing methods fall short of jointly meeting these requirements. \emph{Point-wise retrieval methods}~\citep{wang-etal-2022-table,guo2025birdie} scores each table against the query and returns a top-$k$ list, yielding candidates that are individually relevant yet structurally disconnected and frequently redundant. \emph{Greedy expansion methods}~\citep{wu2025mmqa} extends the selection by appending the table most relevant to either the query or the most recently selected table, but relevance to either anchor does not assure structural compatibility. Recently, JAR~\citep{chen-etal-2024-table} makes the first attempt to retrieve structurally compatible tables by reformulating retrieval as a mixed-integer program (MIP) over the joinable tables. However, JAR adopts a simple linear combination of relevance and joinability that neglects diversity aspect, and the NP-hardness of the MIP formulation causes the approach to scale poorly with lake size.

To address these challenges, we propose \textbf{\method} (\textbf{G}raph-matched \textbf{R}etrieval \textbf{a}nd \textbf{F}usion of \textbf{T}ables), which formulates \problem as a graph-matching problem between two typed graphs: an intent graph $\mathcal{G}_I$ distilled from the natural-language query, encoding the entities, attributes, and structural relations the user wants integrated, and a data lake graph $\mathcal{G}_\mathcal{D}$ encoding schema containment together with joinability and unionability across tables. Then, the problem reduces to extracting a connected subgraph of $\mathcal{G}_\mathcal{D}$ that admits a typed homomorphism from $\mathcal{G}_I$. To address (C1), inspired by mutual-information objectives in Gaussian-process information gain~\citep{krause2008nearoptimal}, we propose the \emph{Information-theoretic Graph Matching Score} (\metric). To address (C2), we cast subgraph generation as a Markov decision process, learn a value function via implicit Q-learning~\citep{kostrikov2022iql} on self-generated trajectories, and apply three pruning stages that exploit anchor reachability, predicate admissibility, and reward monotonicity, reducing the candidate set from $\mathcal{O}(M \cdot N^{L-1})$ to a shortlist of size $\mathcal{O}(BL)$ on which exact \metric is evaluated.

In summary, our contributions are as follows.

\begin{itemize}[leftmargin=*,noitemsep,topsep=0pt]
\item We recast \problem as graph-matching between a query-derived intent graph and a data-lake graph, and introduce \metric, a log-determinant reward that couples semantic relevance, structural compatibility, and evidence diversity in a single objective .
\item  We learn a value function over the MDP via offline implicit Q-learning~\citep{kostrikov2022iql} on self-generated trajectories, and design a three-stage pruning pipeline to achieve search efficiency.
\item We empirically validate \method on Spider and BIRD adapted to the tabular data lake setting, where it outperforms point-wise, greedy-expansion, and structure-aware baselines on every retrieval metric, with relative gains of $7.8\%$ in F1 and $10.6\%$ in Sufficiency over the strongest baseline, at latency on par with greedy-expansion methods.
\end{itemize}
\section{Background}\label{sec:background}

Table retrieval has been studied for decades across machine learning, databases, and information retrieval. At an abstract level, given a tabular corpus $\mathcal{D}=\{T_1,\dots,T_M\}$ and a query $Q$, a retrieval system returns a subset $\mathcal{D}'\subseteq\mathcal{D}$ of size at most $L$ that maximizes a query-conditional relevance score,
\begin{equation}
\mathcal{D}^{\star} \;=\; \arg\max_{\mathcal{D}'\subseteq\mathcal{D},\,|\mathcal{D}'|\le L} \sum_{T \in \mathcal{D}'} r(T\mid Q).
\label{eq:generic-retrieval}
\end{equation}

The formulation of the query $Q$ and the relevance scoring function $r$ inherently depend on the modality used to express the user's information need. Traditional retrieval paradigms have primarily relied on modalities such as keywords~\citep{fernandez2018aurum,zhu2016lshensemble,brickley2019google}, base tables~\citep{zhu2019josie,dong2023deepjoin,santos2021correlation,nargesian2018table,fan2023semantics,khatiwada2023santos,bogatu2020dataset}, and images~\citep{ji2024navigating,ji2025dataset}. More recently, however, the proliferation of large language models has established natural language (NL) as the dominant query modality. Unlike structural or artifact-based inputs, NL empowers users to flexibly articulate complex, multi-faceted information needs without necessitating prior knowledge of the underlying schema or the provision of a seed artifact.

In this paper, we mainly focus on natural-language (NL)-driven retrieval over \emph{tabular data lakes}, e.g.,large repositories of heterogeneous tables harvested from sources such as open data portals, enterprise warehouses, and web crawls. Unlike relational databases, tabular data lakes lack a global schema and rarely provide explicit inter-table relationships. We organize prior work along two retrieval strategies, characterized by how the output $\mathcal{D}'$ is constructed from $\mathcal{D}$, and then contrast both with the formulation we adopt for \PROBLEM.

\paragraph{Point-wise Retrieval.} The dominant strategy scores each table independently against the query and returns the top-$L$ candidates,
\begin{equation}
\mathcal{D}' \;=\; \mathrm{Top}_{L}\bigl\{\, r(T\mid s) \,:\, T \in \mathcal{D} \,\bigr\}
\label{eq:pointwise}
\end{equation}
The score $r$ takes different forms across methods. Learned table-text encoders such as TaPas~\citep{herzig2020tapas}, TaBERT~\citep{yin2020tabert}, and Solo~\citep{wang2023solo} embed tables and queries into a shared latent space and instantiate $r(T \mid s) = \langle \phi(T),\, \psi(s) \rangle$, where $\phi$ and $\psi$ are learned table and query encoders respectively. LLM-judged hybrid pipelines~\citep{balaka2025pneuma} instead aggregate lexical, semantic, and reasoning signals through a prompted scorer $r(T \mid s) = g\bigl(r_{\mathrm{lex}},\, r_{\mathrm{sem}},\, r_{\mathrm{llm}}\bigr)$, where $g$ is an LLM-induced fusion function. All such point-wise methods share a structural limitation: each table is scored in isolation, so when the answer is distributed across several tables no individual $r(T \mid s)$ is large, and the relational operations required to integrate the retrieved tables are not produced as part of the output.

\paragraph{Greedy Expansion.} More recent methods~\citep{wu2025mmqa} acknowledge that a query may require several tables to answer jointly, in the manner of multi-hop reasoning. They therefore adopt an iterative process that, at each step $t$, appends the table most relevant to the query or to the most recently selected table:
\begin{equation}~\label{eq:greedy}
T_{t+1} \;=\; \arg\max_{T \in \mathcal{D} \setminus \mathcal{S}_t}\, r\bigl(T \mid s, T_t\bigr), \qquad \mathcal{S}_{t+1} = \mathcal{S}_t \cup \{T_{t+1}\}
\end{equation}
where $\mathcal{S}_t$ is the selection after $t$ steps and $T_t$ is the most recently appended table. However, relevance to either anchor does not imply that the selected tables can be joined or unioned. The retrieved sequence therefore still suffers from structurally disconnected candidates that cannot be integrated into a coherent answer in practice.

\paragraph{\PROBLEM.} Beyond point-wise retrieval (Eq.~\ref{eq:pointwise}) and greedy iterative expansion (Eq.~\ref{eq:greedy}), the recent JAR~\citep{chen-etal-2024-table} makes the first attempt to retrieve structurally compatible tables by reformulating the task as a mixed-integer program (MIP) over joinable tables; however, its linear combination of relevance and joinability overlooks diversity, and the NP-hardness of the MIP formulation causes the approach to scale poorly with lake size. In contrast, we aim to retrieve a set of tables interconnected through join and union operations that jointly fulfill the user's information need expressed in $Q$. We formulate this as
\begin{equation}\label{eq:our-formulation}
\bigl({\mathcal{D}'}^{\star},\, \Phi^{\star}\bigr) \;=\; \argmax_{\substack{\mathcal{D}'\subseteq\mathcal{D},\,|\mathcal{D}'|\le L,\, \Phi \in \mathcal{F}(\mathcal{D}')}}\; U\bigl(Q,\, \mathcal{D}',\, \Phi\bigr)
\end{equation}
where $\mathcal{F}(\mathcal{D}')$ denotes the space of valid join-union operator sequences that integrate $\mathcal{D}'$ into a single, coherent, and \emph{non-empty} table.
\section{Information-Theoretic Graph Matching Metric}
\label{sec:iges}

\subsection{Table Retrieval As a Graph Matching Problem}
\label{sec:graphs-problem}
 
\noindent\textbf{Data-lake graph.} We model the data lake as a heterogeneous graph $\mathcal{G}_\mathcal{D}=(\mathcal{V}, \mathcal{E})$ comprising three node types and two edge types. The node types are \emph{tables}, \emph{columns}, and \emph{values} (one node per cell). \emph{Containment} edges link each table to its columns and each column to its value nodes, while \emph{structural} edges encode \textsc{join} and \textsc{union} relations between tables and are derived using existing joinability and unionability metrics~\citep{khatiwada2023santos,dong2023deepjoin}.

\noindent\textbf{Intent graph.}
Different from existing table retrieval methods that decompose queries into a flat set of atoms~\citep{chen-etal-2024-table,wu2025mmqa}, we model the user's intent as a typed graph $\mathcal{G}_I$ structured as a forest of \emph{intent trees} connected by cross-tree edges. Each intent tree corresponds to one coherent data need and follows a three-level hierarchy: the root specifies the target \textsc{entity}, its children enumerate the required \textsc{attributes}, and each attribute may carry a \textsc{value} leaf. The graph encodes two types of structural relationship implicitly. Within a single intent tree, the underlying data need may be vertically fragmented across multiple tables in the tabular data lake, which must be combined by \textsc{union}. Across distinct intent trees, the corresponding data needs must be combined by \textsc{join}, often through bridge tables that the user does not name in the query but that the data lake must supply for the join to execute. For example, the query in Fig.~\ref{fig: beneficial examples} induces the two-tree intent graph shown in Fig.~\ref{fig:intent_graph}: a \textsc{University} tree with attributes Name and State, an \textsc{Instructor} tree whose Department attribute carries the value leaf Computer Science, and a cross-tree edge encoding the employment relation between the two entities.

\begin{figure}[t]
\centering
\includegraphics[width=0.9\linewidth]{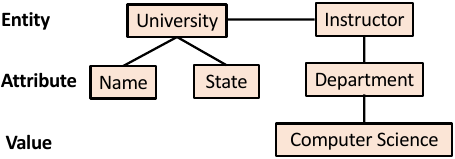}
\caption{Intent graph $\mathcal{G}_I$ for the running query in Fig.~\ref{fig: beneficial examples}: a forest of two intent trees connected by a cross-tree edge.}
\Description{An intent graph consisting of a University tree and an Instructor tree connected by a cross-tree employment edge.}
\label{fig:intent_graph}
\end{figure}

We obtain $\mathcal{G}_I$ from the natural-language query $s$ by prompting an LLM to emit the intent graph as a JSON object conforming to a fixed schema. The prompt specifies the schema (typed nodes, edges, and edge labels) and is organized into four blocks: a task description, the JSON output schema, a small set of constraints, and three in-context examples chosen to exercise (i) a single intent tree with a value constraint, (ii) a two-tree extraction with a cross-tree join, and (iii) a three-tree extraction in which a bridge entity unmentioned in the query is required to connect two named entities. The full prompt, schema, and demonstrations are provided in Appendix~\ref{app:prompt}.

\noindent\textbf{Problem reformulation.} The intent graph $\mathcal{G}_I$ specifies, at the schema level, what any valid integrated table must witness: each node fixes a type-tier element that the result must contain, and each edge fixes a structural dependency that the result must realize. Each edge type $r$ in $\mathcal{G}_I$ is paired with a fixed compatibility predicate $\chi_r$ that enumerates the typed paths in $\mathcal{G}_\mathcal{D}$ realizing $r$: column-of-table containment paths for intra-tree hierarchical edges, and joinable structural paths for cross-tree \textsc{join} edges. \textsc{Union} is not encoded as a separate edge type. Instead, a single intent tree may be witnessed by a union of unionable lake tables, with the union resolved at the homomorphism level rather than through an explicit edge predicate. A $\Phi$ is therefore valid precisely when its induced subgraph $H_\Phi \subseteq \mathcal{G}_\mathcal{D}$ admits a homomorphism from $\mathcal{G}_I$ in which every intent edge maps to a $\chi_r$-compatible path and every intent tree maps to a connected, possibly unioned, lake-table cluster. Writing $\mathcal{A}_I = V_I \cup E_I$ for the set of \emph{intent atoms} ($a := |\mathcal{A}_I|$), we cast retrieval as the constrained matching
\begin{equation}\label{eq:objective-matching}
\Phi^\star \;=\; \argmax_{\Phi}\;\; \mathcal{U}\bigl(H_\Phi;\, \mathcal{G}_I\bigr)
\quad \text{s.t.} \quad |V_T(H_\Phi)| \leq L.
\end{equation}
The matching is a quotient rather than a subgraph isomorphism: a single intent edge may abstract a multi-hop path in $\mathcal{G}_\mathcal{D}$ (e.g., the cross-tree edge in Fig.~\ref{fig:intent_graph} is realized by $T_1 \bowtie (T_2 \cup T_3)$ in Fig.~\ref{fig: beneficial examples}), and a single intent tree may be witnessed by a union of lake tables (e.g., the \textsc{Instructor} tree by $T_2 \cup T_3$). Multiple physical units may therefore collectively witness one intent atom.
\subsection{From Desiderata to the Unified \metric Metric}
\label{sec:metric}

As motivated in Sec.~\ref{sec:intro}, the metric must score $H_\Phi$ along three coupled desiderata, organized from per-atom support to set-level redundancy.
(1) \textit{Semantic relevance.} Each intent atom is supported by an evidence unit whose semantics match it, not merely its type.
(2) \textit{Structural compatibility.} Each intent edge is realized by a typed path in $H_\Phi$ satisfying $\chi_r$, not by two co-relevant tables.
(3) \textit{Diversity.} Distinct evidence units contribute non-overlapping column support, avoiding redundant projections that would double-count in $T_\Phi$.

\noindent\textbf{From desiderata to objective.} We translate the three desiderata into a single objective in which each component corresponds directly to one of them. We summarize semantic relevance and structural compatibility in an evidence-by-atom \emph{support matrix} $M_\Phi \in \mathbb{R}^{q \times a}$, where $q$ is the number of evidence units extracted from $H_\Phi$ (graph-level objects: tables, columns, value summaries, paths, union components) and $a = |\mathcal{A}_I|$. For an intent node $u$, the entry is $M_\Phi[i, u] = \max_{x \in \mathcal{N}(z_i)} \psi(u, x)$, where the relevance score $\psi(u, x) = \mathbb{1}[\tau_I(u) \sim \tau_\mathcal{D}(x)] \cdot \sigma\!\bigl(\beta\, \cos(\mathrm{emb}(u), \mathrm{emb}(x))\bigr)$ fuses temperature-scaled cosine similarity with a fixed type-compatibility gate. For an intent edge $e = (u, r, v)$, the entry is $M_\Phi[i, e] = \max_{p} \psi_{\text{path}}(e, p)$, where the maximum is taken over $\chi_r$-compatible paths $p \in \mathcal{P}^{K, h}_{u,v}$ associated with $z_i$ and $\psi_{\text{path}}(e, p) = \psi(u, x_0) \cdot \bigl(\prod_{g \in p} \omega_\mathcal{D}(g)\bigr)^{1/|p|} \cdot \psi(v, x_{|p|})$ combines the endpoint match scores with the geometric mean of edge-confidence weights along the path. Diversity is encoded in an evidence-by-evidence covariance matrix $C_\Phi \in \mathbb{R}^{q \times q}$ that captures redundancy among evidence units via column-set Jaccard similarity:
$
C_\Phi[i, j] \;=\; \mathrm{ColJac}(z_i, z_j) + \sigma_0^2 \,\mathbb{1}[i = j],
$
with $\sigma_0^2 > 0$ a small jitter constant ensuring $C_\Phi \succ 0$. The support matrix and covariance are coupled into a single reward through the log-determinant of the resulting information matrix:
\begin{equation}
\mathcal{U}_{\text{\metric}}(Q, \Phi) \;=\; \frac{1}{a}\, \log\det\!\left(I_a + M_\Phi^\top C_\Phi^{-1} M_\Phi\right)
\label{eq:iges}
\end{equation}

\noindent\textbf{Information-theoretic interpretation.} Eq.~\eqref{eq:iges} is, up to a multiplicative constant, the mutual information between a Gaussian latent representation of the intent and a noisy linear observation through the retrieved evidence. Place a prior $\theta \sim \mathcal{N}(0, I_a)$ over an intent-atom support latent, and treat each evidence unit as a noisy observation $y_i = M_\Phi[i, \cdot]\,\theta + \epsilon_i$ with $\epsilon \sim \mathcal{N}(0, C_\Phi)$. Standard Gaussian-channel calculus gives $I(\theta; y) = \tfrac{1}{2}\log\det(I_a + M_\Phi^\top C_\Phi^{-1} M_\Phi)$, which equals $\mathcal{U}_{\text{\metric}}$ up to constant rescaling. The same expression is the objective of Bayesian D-optimal experimental design and Gaussian-process information gain~\citep{krause2008nearoptimal}, where it selects maximally-informative observation sets under correlated noise. Under this reading, $M_\Phi$ specifies how strongly each evidence unit observes each intent atom (semantic relevance and structural compatibility), $C_\Phi$ is the noise covariance whose inversion down-weights observations whose noise is correlated (diversity), and the log-determinant is the entropy reduction that the retrieved evidence achieves about the intent. \metric is therefore not a heuristic combination of three signals but a principled information-gain metric.

The \metric reward in Eq.~\eqref{eq:iges} is monotone, submodular, and redundancy-suppressing in the set of evidence units, which yields a $(1-1/e)$-approximation guarantee for greedy selection under the cardinality constraint $|V_T(H_\Phi)| \leq L$~\citep{nemhauser1978submodular} and is the formal basis of the value-guided beam search in Sec.~\ref{sec:online}. Formal statements and proofs are given in Sec.~\ref{sec:proofs}.

\subsection{Properties of the \metric Metric}
\label{sec:proofs}

This subsection formalizes and proves the three properties of the \metric reward claimed in Sec.~\ref{sec:metric}: monotonicity, submodularity, and redundancy suppression. Throughout, we write $J_\Phi = I_a + M_\Phi^\top C_\Phi^{-1} M_\Phi$ for the information matrix on the evidence set of $\Phi$, with $M_\Phi$ and $C_\Phi$ as defined in Sec.~\ref{sec:metric}. Adding an evidence unit $z$ to $\Phi$ extends $M_\Phi$ by appending a row $m^\top \in \mathbb{R}^{1 \times a}$ and extends $C_\Phi$ by appending a row and column $(c,\, \sigma^2)$, where $c \in \mathbb{R}^q$ records the redundancy covariance of $z$ with existing units and $\sigma^2 = 1 + \sigma_0^2$ is its self-covariance. We use $\Delta(z \mid \Phi) := \mathcal{U}_{\metric}(Q,\, \Phi \cup \{z\}) - \mathcal{U}_{\metric}(Q,\, \Phi)$ for the marginal gain.

\begin{proposition}[Monotonicity]
\label{prop:monotone}
For any synthesis $\Phi$ and any evidence unit $z \notin \Phi$, $\Delta(z \mid \Phi) \geq 0$.
\end{proposition}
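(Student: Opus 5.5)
We prove Proposition~\ref{prop:monotone} by a Schur-complement (block-inversion) computation. The idea is to show that appending the new evidence unit $z$ adds a positive semidefinite rank-one term to the information matrix $J_\Phi$. The increment of $\log\det$ is then nonnegative by monotonicity of the determinant on the PSD cone.

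Write the extended covariance as
\[
C' = \begin{pmatrix} C_\Phi & c \\ c^\top & \sigma^2 \end{pmatrix}, \qquad M' = \begin{pmatrix} M_\Phi \\ m^\top \end{pmatrix}.
\]
Since the extended covariance is again a column-Jaccard matrix plus $\sigma_0^2 I$, it is positive definite. The Jaccard-similarity kernel is PSD, which we take as given from the definition, since the paper asserts $C_\Phi \succ 0$ with the jitter. Hence the Schur complement
\[
s := \sigma^2 - c^\top C_\Phi^{-1} c > 0 .
\]
The block inverse then gives
\[
C'^{-1} = \begin{pmatrix} C_\Phi^{-1} & 0 \\ 0 & 0 \end{pmatrix} + \frac{1}{s}\, w w^\top, \qquad w := \begin{pmatrix} -C_\Phi^{-1} c \\ 1 \end{pmatrix}.
\]
Consequently
\[
M'^\top C'^{-1} M' = M_\Phi^\top C_\Phi^{-1} M_\Phi + \frac{1}{s}\, v v^\top, \qquad v := m - M_\Phi^\top C_\Phi^{-1} c .
\]
That is, $J_{\Phi\cup\{z\}} = J_\Phi + s^{-1} v v^\top$. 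The vector $v$ is the "innovation" of the new row after regressing out correlated existing evidence, which is the Gaussian-conditioning reading from Sec.~\ref{sec:metric}.

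Because $J_\Phi \succeq I_a \succ 0$, the matrix determinant lemma gives
\[
\det J_{\Phi\cup\{z\}} = \det J_\Phi \,\bigl(1 + s^{-1} v^\top J_\Phi^{-1} v\bigr).
\]
Therefore
\[
\Delta(z\mid\Phi) = \tfrac{1}{a}\log\bigl(1 + s^{-1} v^\top J_\Phi^{-1} v\bigr) \ge 0,
\]
using $s>0$ and $J_\Phi^{-1}\succ 0$. The argument also yields the equality case $v=0$, meaning $z$ is fully explained by existing units. Alternatively, the inequality can be phrased as conditioning reducing entropy: $I(\theta; y, y_z) \ge I(\theta; y)$.

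The main obstacle is the step $s>0$, i.e.\ positive definiteness of the \emph{extended} covariance. This requires that the ColJac kernel be PSD; the unit diagonal is fine, since $\sigma^2 = 1+\sigma_0^2$ matches $\mathrm{ColJac}(z,z)=1$. Positivity follows from the known fact that Jaccard similarity on sets is a PSD kernel, so adding $\sigma_0^2 I$ gives strict definiteness. We would cite this rather than reprove it. A second subtlety is that adding a unit may also alter existing entries of $M_\Phi$ through the max over neighbourhoods. We treat evidence units as fixed rows, consistent with the setup in Sec.~\ref{sec:proofs}, so only appending occurs.
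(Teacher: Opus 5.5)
Your proof is correct and takes the same route as the paper. The new unit induces a rank-one PSD update of $J_\Phi$, the matrix determinant lemma turns the log-det increment into the log of one plus a nonnegative quadratic form divided by the Schur complement, and positivity of that Schur complement finishes the argument. Your explicit block inversion, which gives the increment $s^{-1} v^\top J_\Phi^{-1} v$ with $v = m - M_\Phi^\top C_\Phi^{-1} c$, is in fact a cleaner and more precise statement of what the paper writes loosely as $\rho^{*-1} m^\top S\, m$. You also rightly flag that $s>0$ rests on the column-Jaccard kernel being PSD.
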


\begin{proof}
By the matrix-determinant lemma applied to the rank-one update of $J_\Phi$ induced by the new row $m$,
\[
\log\det J_{\Phi \cup \{z\}} - \log\det J_\Phi \;=\; \log\bigl(1 + \rho^{*-1}\, m^\top S\, m\bigr),
\]
where $\rho^{*} = \sigma^2 - c^\top C_\Phi^{-1} c \geq \sigma_0^2 > 0$ is the Schur complement of $z$ in the augmented covariance, and $S \succeq 0$ is determined by $J_\Phi$ and the projection of $m$ onto the column span of existing units. Both $\rho^{*}$ and $m^\top S\, m$ are non-negative, so $\log(1 + \rho^{*-1}\, m^\top S\, m) \geq 0$. Dividing by $a > 0$ gives $\Delta(z \mid \Phi) \geq 0$.
\end{proof}

\begin{proposition}[Submodularity]
\label{prop:submodular}
$\mathcal{U}_{\metric}$ is submodular in the set of evidence units. For all $\mathcal{S} \subseteq \mathcal{T}$ and $z \notin \mathcal{T}$,
\[
\Delta(z \mid \mathcal{S}) \;\geq\; \Delta(z \mid \mathcal{T}).
\]
\end{proposition}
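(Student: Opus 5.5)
The plan is to work in the Gaussian-channel reading of Eq.~\eqref{eq:iges} given in Sec.~\ref{sec:metric}, and to reuse the closed form for the marginal gain from the proof of Proposition~\ref{prop:monotone}. With $\theta \sim \mathcal{N}(0, I_a)$ and $y_{\mathcal{S}} = M_{\mathcal{S}}\theta + \epsilon_{\mathcal{S}}$, $\epsilon_{\mathcal{S}} \sim \mathcal{N}(0, C_{\mathcal{S}})$, we have $\mathcal{U}_{\metric}(Q,\mathcal{S}) = \tfrac{2}{a} I(\theta; y_{\mathcal{S}})$. The chain rule then gives $\Delta(z \mid \mathcal{S}) = \tfrac{2}{a} I(\theta; y_z \mid y_{\mathcal{S}})$. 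In explicit Gaussian terms,
\[
\Delta(z \mid \mathcal{S}) \;=\; \frac{1}{a}\log\frac{\mathrm{Var}(y_z \mid y_{\mathcal{S}})}{\mathrm{Var}(y_z \mid y_{\mathcal{S}}, \theta)}
\;=\; \frac{1}{a}\log\Bigl(1 + \rho_{\mathcal{S}}^{*-1}\, \tilde m_{\mathcal{S}}^\top J_{\mathcal{S}}^{-1} \tilde m_{\mathcal{S}}\Bigr).
\]
Here $\rho^*_{\mathcal{S}} = \sigma^2 - c_{\mathcal{S}}^\top C_{\mathcal{S}}^{-1} c_{\mathcal{S}}$ is the noise Schur complement and $\tilde m_{\mathcal{S}} = m - M_{\mathcal{S}}^\top C_{\mathcal{S}}^{-1} c_{\mathcal{S}}$ is the row of $z$ after its noise has been regressed on the existing units. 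This makes the matrix $S$ of Proposition~\ref{prop:monotone} explicit: $S = J_{\mathcal{S}}^{-1}$ acting on $\tilde m_{\mathcal{S}}$.

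Next, I reduce the claim to a single-element step. Submodularity for all $\mathcal{S} \subseteq \mathcal{T}$ follows by telescoping along a chain $\mathcal{S} = \mathcal{S}_0 \subset \mathcal{S}_1 \subset \dots \subset \mathcal{T}$ that adds one unit $w$ at a time. It therefore suffices to show $\Delta(z \mid \mathcal{S}) \ge \Delta(z \mid \mathcal{S} \cup \{w\})$. By the symmetry of the Gaussian chain rule, this is equivalent to $I(y_z; y_w \mid y_{\mathcal{S}}) \ge I(y_z; y_w \mid y_{\mathcal{S}}, \theta)$. Everything is jointly Gaussian after conditioning on $y_{\mathcal{S}}$, so this inequality becomes a comparison of two $2\times 2$ correlation coefficients. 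The first is the predictive correlation of $(y_z, y_w)$ given $y_{\mathcal{S}}$. The second is the noise correlation of $(\epsilon_z, \epsilon_w)$ after projecting out $\epsilon_{\mathcal{S}}$. I would write both explicitly through block Schur complements of $C_{\mathcal{S}\cup\{z,w\}}$ and $J_{\mathcal{S}}$. The predictive covariance is the noise covariance plus the PSD term $\tilde m_z^\top J_{\mathcal{S}}^{-1}\tilde m_w$ coming from $\theta$.

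The main obstacle is this last comparison. Adding the PSD signal term inflates the variances and also the cross-covariance. The squared correlation of the sum dominates that of the noise part only when the signal cross-term has the right sign and magnitude relative to the noise cross-term. First I would exploit the specific structure of $C_\Phi$ from Sec.~\ref{sec:metric}: column-set Jaccard entries are non-negative and the diagonal is dominated by $\sigma^2 = 1+\sigma_0^2$. Likewise $M_\Phi$ has non-negative entries, since $\psi$ and $\psi_{\mathrm{path}}$ lie in $[0,1]$. With these facts I would try to show that the residual noise correlation is bounded by the residual predictive correlation in absolute value, using the sign pattern and the fact that conditioning on $\theta$ removes exactly the shared signal component. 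If that sign argument falls short in general, the fallback is to verify the inequality directly as an inequality between the $2\times2$ determinants of the two conditional covariance blocks, expanded via the matrix-determinant lemma. This is the step I expect to require the most care. Once it holds, the one-step inequality telescopes to the stated $\Delta(z \mid \mathcal{S}) \ge \Delta(z \mid \mathcal{T})$.
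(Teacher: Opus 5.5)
You set this up correctly, and in one respect more carefully than the paper. Your closed form $\Delta(z\mid\mathcal{S})=\frac1a\log\bigl(1+\rho_{\mathcal S}^{*-1}\tilde m_{\mathcal S}^\top J_{\mathcal S}^{-1}\tilde m_{\mathcal S}\bigr)$ with $\tilde m_{\mathcal S}=m-M_{\mathcal S}^\top C_{\mathcal S}^{-1}c_{\mathcal S}$ is right. The co-information identity also makes one-step submodularity exactly equivalent to $I(y_z;y_w\mid y_{\mathcal S})\ge I(y_z;y_w\mid y_{\mathcal S},\theta)$.

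The gap is that this inequality is false for the Jaccard-coupled $C_\Phi$, even with every entry of $M_\Phi$ and $C_\Phi$ non-negative. So neither your sign argument nor the $2\times 2$ determinant fallback can close it. Take $a=1$ and $\sigma^2=1+\sigma_0^2$. Let $z_1$ have row $\mu>0$. Let $z_2$ have row $0$ (every neighbor fails the type gate, e.g.\ a key-only unit) but overlap $z_1$ in column support, so $\mathrm{ColJac}(z_1,z_2)=\kappa\in(0,1)$. Then
\[
\Delta(z_2\mid\emptyset)=0,
\qquad
\Delta(z_2\mid\{z_1\})=\log\frac{1+\mu^2\sigma^2/(\sigma^4-\kappa^2)}{1+\mu^2/\sigma^2}>0,
\]
which violates the statement with $\mathcal{S}=\emptyset\subseteq\mathcal{T}=\{z_1\}$. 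By continuity the violation survives if $z_2$'s row is a small positive $\delta$, which is what the sigmoid produces.

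In your notation, $\tilde m_{\{z_1\}}=-\mu\kappa/\sigma^2\neq 0$. A unit carrying no signal becomes informative once a correlated unit is present, because it lets the channel cancel their shared noise. Equivalently, the noise correlation $\kappa/\sigma^2$ strictly exceeds the predictive correlation $\kappa/\sqrt{\sigma^2(\sigma^2+\mu^2)}$. This holds even though the signal cross-term is non-negative (here zero): adding signal only inflates the variance of $y_{z_1}$ and lowers the correlation.

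What is missing is therefore a hypothesis, not a harder calculation. If $C_\Phi$ is diagonal, the observations are conditionally independent given $\theta$. Then $I(y_z;y_w\mid y_{\mathcal S},\theta)=0$, and your reduction proves submodularity in one line. With off-diagonal Jaccard coupling, only a weaker, approximate form of submodularity could be hoped for, and the $(1-1/e)$ guarantee built on this proposition does not follow as stated.

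The paper's own proof does not get past this point either. It asserts that ``the residual-norm decay dominates the Schur-complement decay'' by appeal to Krause et al. However, the classical submodularity of Gaussian information gain rests on exactly the conditional independence that $C_\Phi$ breaks. In addition, its closed form takes $\tilde m_{\mathcal X}$ to be a projection residual of $m$ itself, which vanishes when $m=0$. That drops the $-M_{\mathcal X}^\top C_{\mathcal X}^{-1}c$ correction your $\tilde m$ correctly carries, which is precisely what drives the counterexample. You located the crux correctly, but it is a false inequality, so the proposition cannot be proved as stated.
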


\begin{proof}
Let $\mathcal{X}$ index any evidence set. The Schur-complement form of the marginal gain is
\[
\Delta(z \mid \mathcal{X}) \;=\; \log\bigl(1 + \rho_\mathcal{X}^{-1}\, \|\tilde m_\mathcal{X}\|^2\bigr),
\]
where $\tilde m_\mathcal{X}$ is the component of $m$ orthogonal to the column span of $\{m_i\}_{i \in \mathcal{X}}$ in the metric induced by $C_\mathcal{X}^{-1}$, and $\rho_\mathcal{X}$ is the Schur complement of $z$ in the augmented covariance $C_{\mathcal{X} \cup \{z\}}$.

For nested $\mathcal{S} \subseteq \mathcal{T}$, the projector onto the column span of $\{m_i\}_{i \in \mathcal{T}}$ has range containing the corresponding projector for $\mathcal{S}$, so $\|\tilde m_\mathcal{S}\|^2 \geq \|\tilde m_\mathcal{T}\|^2$. Schur-complement monotonicity for nested PSD covariances gives $\rho_\mathcal{S} \geq \rho_\mathcal{T}$, hence $\rho_\mathcal{S}^{-1} \leq \rho_\mathcal{T}^{-1}$.

These two inequalities act in opposite directions on the ratio $\rho^{-1}\, \|\tilde m\|^2$. The standard argument (cf. \citealp{krause2008nearoptimal}, Theorem 3) is that the residual-norm decay dominates the Schur-complement decay, so the ratio is non-increasing as $\mathcal{X}$ grows from $\mathcal{S}$ to $\mathcal{T}$. Since $\log(1 + \cdot)$ is monotonically non-decreasing,
\begin{multline*}
\Delta(z \mid \mathcal{S}) \;=\; \log\bigl(1 + \rho_\mathcal{S}^{-1}\, \|\tilde m_\mathcal{S}\|^2\bigr) \\
\;\geq\; \log\bigl(1 + \rho_\mathcal{T}^{-1}\, \|\tilde m_\mathcal{T}\|^2\bigr) \;=\; \Delta(z \mid \mathcal{T}),
\end{multline*}
which is submodularity.
\end{proof}

\begin{proposition}[Redundancy suppression]
\label{prop:redundancy}
Let $z'$ be an evidence unit with average column-set Jaccard $\rho \in [0,1]$ to the units already in $\Phi$. Then $\Delta(z' \mid \Phi)$ is monotonically non-increasing in $\rho$, and $\Delta(z' \mid \Phi) \to 0$ as $\rho \to 1$.
\end{proposition}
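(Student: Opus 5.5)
We follow the Schur-complement route used for Propositions~\ref{prop:monotone} and~\ref{prop:submodular}. The marginal gain of $z'$ factors as
\[
\Delta(z' \mid \Phi) \;=\; \tfrac{1}{a}\log\bigl(1 + \rho^{*-1}\, r^\top J_\Phi^{-1} r\bigr),
\]
where the two ingredients are defined as follows.
\begin{itemize}[leftmargin=*,noitemsep,topsep=0pt]
\item The \emph{residual} row is $r = m - M_\Phi^\top C_\Phi^{-1} c$. It is the part of $z'$'s support that cannot be predicted linearly from the existing units through their redundancy covariance.
\item The \emph{Schur complement} is $\rho^{*} = \sigma^2 - c^\top C_\Phi^{-1} c$.
\end{itemize}
This factorization comes from the block inverse of $C_{\Phi\cup\{z'\}}$ together with the matrix determinant lemma. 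It is exact: $M^\top C^{-1} M$ for the augmented system equals $M_\Phi^\top C_\Phi^{-1}M_\Phi + \rho^{*-1} r r^\top$.

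Redundancy enters only through the covariance column $c$, whose entries are the column-set Jaccard values $\mathrm{ColJac}(z', z_i)$. We therefore parametrize the overlap pattern by its average $\rho$. Concretely, we write $c = \rho\, \bar c$ along a fixed direction $\bar c$. The canonical case is uniform overlap, $c = \rho\mathbf 1$, which is what "average Jaccard $\rho$" most naturally means. We then study the scalar function
\[
g(\rho) \;=\; \frac{r(\rho)^\top J_\Phi^{-1} r(\rho)}{\rho^*(\rho)}.
\]

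\textbf{Step 1 (limit $\rho\to1$).} We model full redundancy as $z'$ projecting to the same underlying content as existing units. Its support row then coincides with the support explained through the overlapping units, i.e.\ $m \to M_\Phi^\top C_\Phi^{-1} c$ in the limit. This gives $r \to 0$.

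The denominator stays bounded away from zero: $\rho^{*} \ge \sigma_0^2 > 0$ by the jitter term, exactly as in Proposition~\ref{prop:monotone}. Hence $g(\rho)\to 0$ and $\Delta \to \tfrac{1}{a}\log 1 = 0$.

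\textbf{Step 2 (monotonicity).} Differentiate $g$ in $\rho$.
\begin{itemize}[leftmargin=*,noitemsep,topsep=0pt]
\item The numerator is $\|r\|^2_{J_\Phi^{-1}}$, with $r$ affine in $\rho$ and $r(1)=0$ under the Step~1 identification. Writing $r(\rho) = (1-\rho)\, r(0)$, the numerator equals $(1-\rho)^2 \|r(0)\|^2_{J_\Phi^{-1}}$.
\item The denominator $\rho^*(\rho) = \sigma^2 - \rho^2\, \bar c^\top C_\Phi^{-1}\bar c$ decreases quadratically.
\end{itemize}
So $g$ is a ratio $(1-\rho)^2 A/(\sigma^2 - \rho^2 B)$. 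We need its derivative to be non-positive on $[0,1]$, which is equivalent to
\[
(1-\rho)\,(\sigma^2 - \rho^2 B) \;\ge\; \rho B\,(1-\rho)^2,
\]
i.e.\ $\sigma^2 \ge \rho B$ after simplification. This follows from positive definiteness of the augmented covariance at $\rho=1$: $\sigma^2 - B \ge \sigma_0^2 > 0$, together with $\rho \le 1$. Finally, $\log(1+\cdot)$ is increasing, so $\Delta$ inherits the monotonicity of $g$.

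\textbf{Main obstacle.} The delicate point is the modelling assumption linking $m$ to $\rho$. The statement fixes only the covariance overlap, while $\Delta$ depends jointly on $m$ and $c$. Without tying the support row of $z'$ to its redundancy, decreasing $\rho^*$ alone would actually \emph{increase} the gain. This is the same tension noted in the proof of Proposition~\ref{prop:submodular}.

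We would therefore make the coupling explicit first: interpolate $z'$ between a fresh unit ($\rho=0$) and an exact projection onto existing evidence ($\rho=1$). Under this interpolation both the residual norm and the Schur complement are controlled by $\rho$. If a general $\bar c$ direction complicates Step~2, we would fall back to the uniform case $c=\rho\mathbf 1$, where $B=\mathbf 1^\top C_\Phi^{-1}\mathbf 1$ is an explicit scalar.
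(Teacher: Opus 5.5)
Your argument is correct under the coupling you make explicit. On the decisive step it takes a different and more careful route than the paper.

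The paper uses the same Schur-complement skeleton but keeps the residual schematic (a ``$C^{-1}$-orthogonal component'' $\tilde m$ of $m$). It asserts that $\|\tilde m\|^2$ and $\rho^{*}$ are both non-increasing in $\rho$. It treats the limit as a vanishing-over-vanishing ratio, using $\|\tilde m\|^2 = O((1-\rho)^2)$ and $\rho^{*} = \Theta(1-\rho)+\sigma_0^2$. It then obtains monotonicity ``from the chain rule'' because both derivatives are non-positive. That inference does not by itself fix the sign of the derivative of $\|\tilde m\|^2/\rho^{*}$: a shrinking denominator pushes the ratio up, which is exactly the tension you point out.

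You instead use the exact rank-one identity with $r = m - M_\Phi^\top C_\Phi^{-1}c$ and the $J_\Phi^{-1}$-norm. You observe that $\rho^{*}\ge\sigma_0^2$, so the limit needs only $r\to 0$ and there is no $0/0$ form. You then differentiate $(1-\rho)^2A/(\sigma^2-\rho^2B)$ directly, reducing monotonicity to $\rho B\le\sigma^2$, which follows from $\sigma^2-B\ge\sigma_0^2$; the algebra checks out.

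What your route buys is an actual verification of the sign. What it costs is that the hypothesis becomes visible: $c=\rho\bar c$ with a genuine Jaccard-plus-jitter covariance at $\rho=1$, and $m = M_\Phi^\top C_\Phi^{-1}\bar c$. With $m$ held fixed, this already gives $r(\rho)=(1-\rho)m$, so no separate interpolation of $m$ is needed. The paper relies on the same assumption implicitly through its rate for $\|\tilde m\|^2$. Some such assumption is unavoidable, since $\Delta$ depends on $m$ and on the full vector $c$, not only on the average Jaccard.

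Two remarks sharpen your ``main obstacle'' discussion. First, average Jaccard $\rho\to1$ forces every entry of $c$ to $1$, so the limit regime exists only when all existing units share the column set of $z'$. Your uniform case $c=\rho\mathbf 1$ is therefore the only consistent one. Second, the identification $m = M_\Phi^\top C_\Phi^{-1}\mathbf 1$ is a modelling choice rather than a consequence of full overlap. A literal duplicate $m=m_i$ generally leaves $r\neq0$ at $\rho=1$; with identical existing rows $\bar m$ one gets $r=\sigma_0^2\bar m/(q+\sigma_0^2)$. So the exact zero limit comes from the assumption, not from the Jaccard term.
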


\begin{proof}
The off-diagonal entries of the redundancy vector $c$ for $z'$ scale with $\rho$ entry-wise. As $\rho$ increases from $0$ to $1$:
\begin{itemize}[leftmargin=*,noitemsep,topsep=0pt]
\item The Schur complement $\rho^{*} = \sigma^2 - c^\top C_\Phi^{-1} c$ is non-increasing in $\rho$, decreasing toward $\sigma_0^2$ as $c \to \mathbf{1}$ in the limit.
\item The residual norm $\|\tilde m\|^2$ in the $C^{-1}$-induced metric is non-increasing in $\rho$, since growing redundancy increases the column span $\{m_i\}_{i \in \Phi}$ contains, hence the orthogonal complement of $m$ shrinks. In the limit $\rho \to 1$, $z'$ becomes representable in the existing column span and $\|\tilde m\|^2 \to 0$.
\end{itemize}
The product $\rho^{*-1} \|\tilde m\|^2$ involves a vanishing numerator and a vanishing denominator. Direct computation under a first-order expansion in $(1-\rho)$ shows $\|\tilde m\|^2 = O((1-\rho)^2)$ and $\rho^{*} = \Theta(1-\rho) + \sigma_0^2$, so the ratio vanishes at rate $O(1-\rho)$ and $\Delta(z' \mid \Phi) = \log(1 + O(1-\rho)) \to 0$.

For monotonicity, $\partial \Delta(z' \mid \Phi) / \partial \rho \leq 0$ on $[0,1]$ follows from the chain rule applied to the two terms above, both of whose derivatives are non-positive throughout the interval.
\end{proof}

\noindent\textbf{Approximation guarantee.}
Combining Propositions~\ref{prop:monotone} and \ref{prop:submodular} with the classical result of \citep{nemhauser1978submodular}, the greedy algorithm that iteratively appends the evidence unit with maximum marginal gain achieves at least
\[
\mathcal{U}_{\metric}(Q,\, \Phi^{\text{greedy}}) \;\geq\; \bigl(1 - 1/e\bigr) \cdot \mathcal{U}_{\metric}(Q,\, \Phi^\star)
\]
under the cardinality constraint $|V_T(H_\Phi)| \leq L$, where $\Phi^\star$ denotes the optimal synthesis.
\section{Efficient Search via Offline Value Learning}
\label{sec:rl}

While Section~\ref{sec:metric} defines $\mathcal{U}_{\metric}$ as a structured reward over induced subgraphs, exactly computing the optimum $\Phi^\star$ that maximizes $\mathcal{U}_{\metric}(Q, \Phi)$ is intractable. The candidate space grows exponentially in the branching factor of $\mathcal{G}_\mathcal{D}$ and the depth bound $L$. To address this, we recast candidate construction as sequential subgraph generation under a learned value function and develop an offline reinforcement learning procedure based on implicit Q-learning~\citep{kostrikov2022iql} for the resulting Markov decision process. A central challenge is that data lakes carry no annotated trajectories or reward labels, and no ground-truth $(Q, \Phi^\star)$ pairs exist to supervise the value function. We resolve this by self-generating training data through a canonical compression operator that inverts the homomorphism, yielding aligned (intent, evidence) trajectories at arbitrary scale without human annotation.

\subsection{Offline Training: IQL with Self-Generated Trajectories}
\noindent\textbf{MDP formulation.}
\label{sec:mdp}
We define the MDP $(\mathcal{S}, \mathcal{A}, P, r, \gamma)$ of the candidate subgraph generation over partial-sequence prefixes, with reward $r$ and discount $\gamma$ specified below. A state $s_t = (\mathcal{G}_I, H_{\Phi_t}, \mathcal{F}_t)$ comprises the (constant) intent graph, the induced evidence subgraph $H_{\Phi_t} \subseteq \mathcal{G}_\mathcal{D}$, and the frontier $\mathcal{F}_t$ of valid extensions. An action specifies an operator, an anchor table inside the current retrieved table set, and a new table attached via that operator: $a_t = (o, T_{\text{anchor}}, T_{\text{new}})$ with $o \in \{\bowtie, \cup\}$, $T_{\text{anchor}} \in V_T(\Phi_t)$, and $T_{\text{new}} \in \{T \in V_T \setminus V_T(\Phi_t) : (T_{\text{anchor}}, T) \in E_o\}$. A special \textsc{stop} action terminates the episode. The transition $P$ is deterministic, episodes are bounded by depth $L$, and the reward is sparse and terminal:
\begin{equation}
r(s_t, a_t) \;=\; \begin{cases} \mathcal{U}_{\metric}(Q, \Phi_t) & \text{if } a_t = \textsc{stop} \text{ or } t = L, \\ 0 & \text{otherwise.} \end{cases}
\label{eq:reward}
\end{equation}
where $\gamma \in (0, 1]$ denotes the discount factor. 

\noindent\textbf{Implicit Q-learning.}
We learn $Q_\omega$ via implicit Q-learning (IQL)~\citep{kostrikov2022iql}, which jointly satisfies three properties of our setting that alternative offline RL methods do not. First, our inference procedure is search rather than sampling: $Q_\omega$ ranks frontier actions through beam search, so we require a value function as output rather than a learned policy. Second, the action space indexes into a large collection of tables that differs between training and deployment, so methods enforcing explicit per-action behavioral constraints. The Q-function must instead generalize through its own parameterization, as detailed below. Third, our terminal-only reward and deterministic transitions render the IQL bootstrap target $r + \gamma V_\psi(s')$ unbiased without further variance reduction, since each $(s, a)$ admits a single successor.
IQL avoids querying $Q_\omega$ on out-of-distribution actions~\citep{fujimoto2019bcq} by training an auxiliary value network $V_\psi$ to approximate the upper expectile of $Q$ conditional on the dataset action distribution, and then bootstrapping $Q_\omega$ against $V_\psi$:
\begin{equation}
\begin{aligned}
\mathcal{L}_V(\psi) &= \mathbb{E}_{(s, a) \sim \mathcal{D}}\bigl[L_2^\tau\bigl(Q_{\bar\omega}(s, a) - V_\psi(s)\bigr)\bigr] \\
\mathcal{L}_Q(\omega) &= \mathbb{E}_{(s, a, r, s') \sim \mathcal{D}}\bigl[\bigl(r + \gamma V_\psi(s') - Q_\omega(s, a)\bigr)^2\bigr]
\end{aligned}
\end{equation}
where $L_2^\tau(u) = |\tau - \mathbb{1}(u < 0)|\, u^2$ is the asymmetric squared loss with expectile $\tau \in (0.5, 1)$, and $Q_{\bar\omega}$ is a Polyak-averaged target network~\citep{mnih2015dqn}. Its parameters $\bar\omega$ track the online parameters $\omega$ via the exponential moving average $\bar\omega \leftarrow \rho\, \bar\omega + (1 - \rho)\, \omega$ with $\rho$ close to one, which decouples the bootstrap target from rapid online updates and stabilizes training. Because $V_\psi$ is regressed on $(s, a)$ pairs drawn from $\mathcal{D}$, the bootstrap target depends only on $s'$ and never queries $Q_\omega$ on out-of-distribution actions.
Furthermore, we parameterize $Q_\omega$ via an action-embedding factorization,
$Q_\omega(s, a) = g_\omega\bigl(\mathrm{enc}(s),\, \mathrm{emb}(T_{\text{anchor}}),\, \mathrm{emb}(T_{\text{new}}),\, o\bigr)$,
where $\mathrm{enc}(\cdot)$ is a heterogeneous evidence encoder over $(\mathcal{G}_I, H_{\Phi_t}, \Phi_t)$ and $\mathrm{emb}(\cdot)$ is the shared table encoder from Section~\ref{sec:metric}.

\noindent\textbf{Compression-inverted self-supervision.}
Training $Q_\omega$ requires labeled (query, gold-table-set) pairs,, but soliciting such annotations on a data lake is prohibitively expensive. We avoid annotation entirely by exploiting a structural property of our formulation. Since the intent graph $\mathcal{G}_I$ is a schema-level compression of the retrieved subgraph via the operator $\mathcal{C}$, we invert the construction: sample a candidate subgraph $H^+$ from $\mathcal{G}_\mathcal{D}$ first, then set $\mathcal{G}_I := \mathcal{C}(H^+)$ to obtain a matching pseudo-query at zero cost. To achieve this goal, we propose a three-stage strategy: balanced partitioning with degree-stratified seeding, centrality-based expansion of a heterogeneous evidence subgraph $H^+$, and canonical compression of $H^+$ into a paired intent graph $\mathcal{G}_I$. 
\begin{itemize}[leftmargin=*,noitemsep]
\item \textbf{Stage 1: Partitioning and seeding.} We partition $\mathcal{G}_\mathcal{D}$ into $K$ balanced components with METIS~\citep{karypis1997metis}, compute node-degree quantiles within each component, and draw one seed per quantile to populate the global seed set $\mathcal{S}$. This stratification ensures topologically diverse starting points and prevents the corpus from being dominated by densely connected hubs.
\item \textbf{Stage 2: Centrality-based expansion.} For each seed $v \in \mathcal{S}$, we sample tables first, then their columns and values. Tables are added by iteratively expanding from $v$ up to a size limit $L$, with each new table drawn proportional to a centrality score $C(u)$ that aggregates joinability and unionability likelihoods to the current set. Within each sampled table, key columns participating in structural edges are always retained (otherwise the recorded join paths become unrealizable), a fraction $\beta$ of payload columns is sampled uniformly, and a value summary is attached to each retained column. The result is a heterogeneous subgraph $H^+$ of tables, columns, and value summaries connected by their containment and structural edges in $\mathcal{G}_\mathcal{D}$.
\item \textbf{Stage 3: Canonical compression to a paired intent graph.} We construct $\mathcal{G}_I = \mathcal{C}(H^+)$ in three operations: (i) merge tables connected by union edges into shared clusters; (ii) detect bridge clusters as those whose sampled columns are all keys participating in structural edges, and delete them; (iii) abstract each joinable path through a deleted bridge cluster as a cross-tree \textsc{join} edge between the surviving clusters' tree roots, while preserving containment within each tree to give the entity-attribute-value hierarchy. By construction, $H^+$ realizes $\mathcal{G}_I$ under the homomorphism, so the pair $(\mathcal{G}_I, H^+)$ has verifiable utility $\mathcal{U}_{\metric}(\mathcal{G}_I, H^+)$.
\end{itemize}

Algorithm~\ref{alg:self-traj} formalizes the four stages of this procedure, and Algorithm~\ref{alg:compression} specifies the canonical compression operator $\mathcal{C}$ that constructs each pseudo-intent graph from its sampled subgraph. In Algorithm~\ref{alg:self-traj}, Stage~1 (lines~1--5) partitions $\mathcal{G}_\mathcal{D}$ with METIS and draws degree-stratified seeds from each component, ensuring topological diversity across the lake and preventing dominance by densely connected hubs. Stage~2a (lines~8--17) grows a heterogeneous table set $V_T^+$ around each seed by sampling proportional to a centrality score $C(u)$ (line~12) that mixes joinability and unionability mass to the current synthesis. Stage~2b (lines~18--26) augments every retained table with its structural-edge keys, a $\beta$-fraction of uniformly sampled payload columns, and the associated value summaries, yielding the induced evidence subgraph $H^+$ (line~26). Stage~3 (lines~27--29) inverts the homomorphism via the canonical compression operator $\mathcal{C}$ to obtain a paired pseudo-intent graph $\mathcal{G}_I$ and its terminal utility $r_T$. Stage~4 (lines~30--35) unrolls the recorded action trace into Bellman transitions with sparse terminal credit assignment, populating the offline replay buffer $\mathcal{D}$ consumed by IQL.

\begin{algorithm}[t]
\small
\caption{Self-trajectory generation for offline IQL training.}
\label{alg:self-traj}
\begin{algorithmic}[1]
\Require Data lake graph $\mathcal{G}_\mathcal{D}$; partitions $K$; quantiles $J$; subgraph size $L$; column-sample fraction $\beta$; centrality mix $\alpha$
\Ensure Replay buffer $\mathcal{D}$ of transitions $(s_t, a_t, r_t, s_{t+1})$
\State \emph{Stage 1: Balanced partitioning and degree-stratified seeding.}
\State $\mathcal{P} \gets \textsc{Metis}(\mathcal{G}_\mathcal{D}, K)$;\; $\mathcal{S} \gets \emptyset$
\For{each component $\mathcal{C}_k \in \mathcal{P}$ and each degree quantile $q_j$ of $\mathcal{C}_k$}
  \State Sample one table-node $v_{k,j}$ at quantile $q_j$ and add to $\mathcal{S}$
\EndFor
\State $\mathcal{D} \gets \emptyset$
\For{each seed $v \in \mathcal{S}$}
  \State \emph{Stage 2a: Centrality-based table sampling.}
  \State $V_T^{+} \gets \{v\}$;\; $\mathcal{F} \gets \mathcal{N}_{\text{table}}(v)$;\; $\mathrm{traj} \gets [\,]$
  \For{$t = 0, \dots, L{-}1$}
    \If{$\mathcal{F} = \emptyset$} \textbf{break} \EndIf
    \State $C(u) \gets \alpha\, \omega_\mathcal{D}^{\bowtie}(u, V_T^{+}) + (1{-}\alpha)\, \omega_\mathcal{D}^{\cup}(u, V_T^{+})$ for each $u \in \mathcal{F}$
    \State Sample $u^{*} \sim \mathrm{Categorical}(\mathcal{F};\, C / \sum C)$
    \State $a_t \gets$ best $(o^{*}, T_{\text{anchor}}^{*}, u^{*})$ pair connecting $u^{*}$ to $V_T^{+}$
    \State $V_T^{+} \gets V_T^{+} \cup \{u^{*}\}$;\; update $\mathcal{F}$
    \State $\mathrm{traj}.\mathrm{append}(V_T^{+\,(t)}, a_t)$
  \EndFor
  \State \emph{Stage 2b: Column and value sampling.}
  \State $V_C^{+} \gets \emptyset$;\; $V_V^{+} \gets \emptyset$
  \For{each table $T \in V_T^{+}$}
    \State $\mathcal{K}(T) \gets$ keys of $T$ participating in structural edges to $V_T^{+} \setminus \{T\}$
    \State $\mathcal{P}(T) \gets$ uniform random sample of size $\lceil \beta \cdot |\mathrm{cols}(T) \setminus \mathcal{K}(T)| \rceil$ from payload columns of $T$
    \State $V_C^{+} \gets V_C^{+} \cup \mathcal{K}(T) \cup \mathcal{P}(T)$
    \State $V_V^{+} \gets V_V^{+} \cup \bigcup_{c \in \mathcal{K}(T) \cup \mathcal{P}(T)} \mathrm{ValueSummary}(c)$
  \EndFor
  \State $H^{+} \gets$ subgraph of $\mathcal{G}_\mathcal{D}$ induced by $V_T^{+} \cup V_C^{+} \cup V_V^{+}$
  \State \emph{Stage 3: Pseudo-intent construction.}
  \State $\mathcal{G}_I \gets \mathcal{C}(H^{+})$ \Comment{Algorithm~\ref{alg:compression}}
  \State $r_T \gets \mathcal{U}_{\metric}(\mathcal{G}_I, H^{+})$
  \State \emph{Stage 4: Trajectory unrolling.}
  \For{$t = 0, \dots, |\mathrm{traj}|{-}1$}
    \State $s_t, s_{t+1} \gets$ states corresponding to $V_T^{+\,(t)}, V_T^{+\,(t+1)}$ paired with $\mathcal{G}_I$
    \State $r_t \gets r_T \cdot \mathbb{1}[t = |\mathrm{traj}|{-}1]$
    \State $\mathcal{D} \gets \mathcal{D} \cup \{(s_t, a_t, r_t, s_{t+1})\}$
  \EndFor
\EndFor
\State \Return $\mathcal{D}$
\end{algorithmic}
\end{algorithm}

Algorithm~\ref{alg:compression} realizes the canonical compression operator $\mathcal{C}$ as four deterministic graph operations on the sampled subgraph $H^+$. Step~1 (line~1) merges tables connected by union edges into clusters $\{\mathcal{T}_m\}$, reflecting that union-compatible tables collapse to a single intent-graph node under the homomorphism. Step~2 (lines~2--12) classifies each cluster by examining its sampled columns: clusters whose payload set $\mathrm{pl}_m$ is empty are marked as bridges and collected in $\mathcal{B}$ (line~8), while the remainder are retained as reference clusters in $\mathcal{R}$ (line~10). Step~3 (lines~13--19) abstracts every joinable path between two reference clusters whose interior lies entirely in $\mathcal{B}$ into a single \textsc{join} edge labeled by the participating keys (line~17), thereby eliding bridges from the schema view. Step~4 (lines~20--24) projects each reference cluster onto its entity-attribute-value tree, populating the intent-graph node set $\mathcal{V}_I$. By construction, the returned $\mathcal{G}_I = (\mathcal{V}_I, \mathcal{E}_I)$ admits $H^+$ as a witness under the schema-level homomorphism, which justifies its use as a paired pseudo-query in Algorithm~\ref{alg:self-traj}.

\begin{algorithm}[t]
\small
\caption{Canonical compression operator $\mathcal{C}$: $H^+ \mapsto \mathcal{G}_I$.}
\label{alg:compression}
\begin{algorithmic}[1]
\Require Sampled subgraph $H^+ = (V_T^+, V_C^+, V_V^+, E^+)$ with structural edges partitioned into join edges $E_{\bowtie}$ and union edges $E_{\cup}$
\Ensure Intent graph $\mathcal{G}_I = (\mathcal{V}_I, \mathcal{E}_I)$
\State $\{\mathcal{T}_1, \dots, \mathcal{T}_M\} \gets \textsc{ConnComp}(V_T^+, E_\cup)$ \Comment{Step 1: union-cluster merge}
\State $\mathcal{B} \gets \emptyset$;\quad $\mathcal{R} \gets \emptyset$ \Comment{Step 2: bridge detection}
\For{$m = 1, \dots, M$}
  \State $\mathrm{cols}_m \gets \bigl(\textstyle\bigcup_{T \in \mathcal{T}_m} \mathrm{cols}(T)\bigr) \cap V_C^+$
  \State $\mathrm{bd}_m \gets \{c \in \mathrm{cols}_m : \exists\,(c, c') \in E_{\bowtie},\; c' \notin \mathrm{cols}_m\}$
  \State $\mathrm{pl}_m \gets \mathrm{cols}_m \setminus \mathrm{bd}_m$
  \If{$\mathrm{pl}_m = \emptyset$}
    \State $\mathcal{B} \gets \mathcal{B} \cup \{\mathcal{T}_m\}$
  \Else
    \State $\mathcal{R} \gets \mathcal{R} \cup \{\mathcal{T}_m\}$
  \EndIf
\EndFor
\State $\mathcal{E}_I \gets \emptyset$ \Comment{Step 3: JOIN-edge abstraction}
\For{each pair $(\mathcal{T}_i, \mathcal{T}_j) \in \mathcal{R} \times \mathcal{R}$ with $i < j$}
  \State $\Pi_{ij} \gets \bigl\{\pi \in \mathrm{paths}(H^+; \mathcal{T}_i, \mathcal{T}_j) : \mathrm{int}(\pi) \subseteq \textstyle\bigcup_{\mathcal{B}' \in \mathcal{B}} V_T(\mathcal{B}')\bigr\}$
  \If{$\Pi_{ij} \neq \emptyset$}
    \State $\mathcal{E}_I \gets \mathcal{E}_I \cup \bigl\{\bigl(\mathcal{T}_i, \mathcal{T}_j, \textsc{join}, \mathrm{keys}(\Pi_{ij})\bigr)\bigr\}$
  \EndIf
\EndFor
\State $\mathcal{V}_I \gets \emptyset$ \Comment{Step 4: hierarchical projection}
\For{each $\mathcal{T}_m \in \mathcal{R}$}
  \State $\mathrm{Tree}_m \gets \bigl(\mathrm{schema}(\mathcal{T}_m),\; \mathrm{pl}_m,\; \{\mathrm{ValSum}(c) : c \in \mathrm{pl}_m\}\bigr)$
  \State $\mathcal{V}_I \gets \mathcal{V}_I \cup \{\mathrm{Tree}_m\}$
\EndFor
\State \Return $\mathcal{G}_I = (\mathcal{V}_I, \mathcal{E}_I)$
\end{algorithmic}
\end{algorithm}

\subsection{Online Search: Multi-Stage Pruning}
\label{sec:online}

The trained $Q_\omega$ provides learned lookahead, but the raw action space of size $|\mathcal{F}_L| \sim b^L$ remains too large to score exhaustively. We organize the online search into three stages, each grounded in a distinct property of the graph-extraction formulation: \emph{anchor reachability} (initialization), \emph{predicate admissibility} (expansion), and \emph{reward monotonicity} (reranking). Together they reduce the candidate set to a shortlist of size $O(BL)$ on which exact $\mathcal{U}_{\metric}$ is evaluated. The full procedure is summarized in Algorithm~\ref{alg:online-search}.

\noindent\textbf{Initialization (anchor reachability).}
For each intent atom $a_j \in \mathcal{A}_I$, we retrieve the top-$K$ evidence nodes by nearest-neighbor lookup in the shared embedding space, take their incident tables as the seed set $V_0$, and define the candidate region as the $(L{-}1)$-hop closure of $V_0$ along structural edges:
$
\mathcal{G}_q \;=\; \mathcal{G}_\mathcal{D}\bigl[\mathcal{N}_{L-1}^{\text{struct}}(V_0)\bigr].
\label{eq:cand-region}
$
Any valid synthesis of length at most $L$ rooted at an anchor lies within $\mathcal{G}_q$ by construction, since each step traverses one structural edge.

\noindent\textbf{Expansion (predicate admissibility).}
Inside $\mathcal{G}_q$, an action $(o, T_{\text{anchor}}, T_{\text{new}})$ is \emph{admissible} iff some unsupported intent atom $a_j$ satisfies $T_{\text{new}} \in R_h(a_j)$, where $R_h(a_j)$ is the set of tables reachable from an anchor of $a_j$ via a $\chi_r$-compatible path of length at most $h$. Inadmissible actions add zero mass to every column of $M_\Phi$ and are filtered before scoring. Among the admissible survivors, $Q_\omega$ guides a beam search of width $B$: at each depth $t$, surviving prefixes are extended by admissible frontier triples, scored by $Q_\omega$, and the top-$B$ extensions across all parents form $\mathcal{B}_{t+1}$.

\noindent\textbf{Reranking (reward monotonicity).}
The shortlist $\mathcal{C}_{\text{final}} = \bigcup_t \mathcal{B}_t$ of size $O(BL)$ admits exact $\mathcal{U}_{\metric}$ evaluation. We accelerate this stage via the monotonicity of $\log\det J_\Phi$ established in Prop.~\ref{prop:monotone}: any extension of $\Phi$ has $\mathcal{U}_{\metric}$ at least that of $\Phi$, so a Minoux-style priority queue~\citep{minoux1978lazy} ordered by upper-bound score allows lazy pruning whenever the queue top falls below the best confirmed score. Within a single prefix, rank-one Cholesky updates reduce per-extension cost from $O(a^3)$ to $O(a^2)$.

Algorithm~\ref{alg:online-search} couples the three pruning stages into a single inference procedure, where $\Phi \oplus a$ denotes the prefix obtained by appending action $a$ to $\Phi$. Stage~1 (lines~1--5) instantiates anchor reachability: per-atom top-$K$ retrieval seeds $V_0$ (lines~2--4), and the candidate region $\mathcal{G}_q$ is set to the $(L{-}1)$-hop structural closure of $V_0$ (line~5), which by construction contains every length-$L$ synthesis rooted at an anchor. Stage~2 (lines~6--17) runs a beam search of width $B$ over predicate-admissible frontier triples (line~10), scoring each extension with $Q_\omega$ (line~12) and retaining the top-$B$ prefixes per depth (line~15). Stage~3 (lines~18--30) reranks the union $\mathcal{C}_{\text{final}}$ under exact $\mathcal{U}_{\metric}$ via a Minoux-style priority queue. Reward monotonicity (Prop.~\ref{prop:monotone}) supplies an upper bound on each prefix's score and licenses early termination once the queue top falls below the best confirmed utility (lines~23--25), while rank-one Cholesky updates inside the bound evaluation (line~26) reduce per-prefix cost from $O(a^3)$ to $O(a^2)$.

\begin{algorithm}[t]
\small
\caption{Online search via multi-stage pruning.}
\label{alg:online-search}
\begin{algorithmic}[1]
\Require Intent graph $\mathcal{G}_I$; data lake graph $\mathcal{G}_\mathcal{D}$; trained value function $Q_\omega$; budget $L$; beam width $B$; top-$K$ neighbors $K$; hop limit $h$
\Ensure Optimal synthesis $\Phi^*$
\State $V_0 \gets \emptyset$ \Comment{Stage 1: anchor reachability}
\For{each intent atom $a_j \in \mathcal{A}_I$}
  \State $V_0 \gets V_0 \cup \textsc{TopK}\bigl(\mathrm{emb}(a_j),\, V_T(\mathcal{G}_\mathcal{D}),\, K\bigr)$
\EndFor
\State $\mathcal{G}_q \gets \mathcal{G}_\mathcal{D}\bigl[\mathcal{N}_{L-1}^{\text{struct}}(V_0)\bigr]$
\State $\mathcal{B}_0 \gets \{\Phi_0\}$ where $\Phi_0$ is the empty prefix \Comment{Stage 2: beam expansion}
\For{$t = 0, 1, \dots, L-1$}
  \State $\mathcal{C} \gets \emptyset$
  \For{each $\Phi \in \mathcal{B}_t$}
    \State $\mathcal{F}(\Phi) \gets \{a = (o, T_{\text{anchor}}, T_{\text{new}}) \subseteq \mathcal{G}_q : a \text{ admissible w.r.t. } \Phi\}$
    \For{each $a \in \mathcal{F}(\Phi)$}
      \State $\mathcal{C} \gets \mathcal{C} \cup \bigl\{\bigl(\Phi \oplus a,\; Q_\omega(\Phi, a)\bigr)\bigr\}$
    \EndFor
  \EndFor
  \State $\mathcal{B}_{t+1} \gets \textsc{TopB}(\mathcal{C}, B)$
\EndFor
\State $\mathcal{C}_{\text{final}} \gets \textstyle\bigcup_{t=0}^{L} \mathcal{B}_t$
\State $\mathcal{Q} \gets$ priority queue over $\mathcal{C}_{\text{final}}$ ordered by $Q_\omega$-based upper bound \Comment{Stage 3: lazy reranking}
\State $\Phi^* \gets \arg\max_{\Phi \in \mathcal{B}_L} \mathcal{U}_{\metric}(Q, \Phi)$
\State $u^* \gets \mathcal{U}_{\metric}(Q, \Phi^*)$
\While{$\mathcal{Q} \neq \emptyset$}
  \State $(\Phi, \bar{u}) \gets \mathcal{Q}.\textsc{pop}()$
  \If{$\bar{u} < u^*$}
    \State \textbf{break} \Comment{lazy-evaluation cutoff (Prop.~\ref{prop:monotone})}
  \EndIf
  \State $u \gets \mathcal{U}_{\metric}(Q, \Phi)$ via rank-one Cholesky update on $J_\Phi$
  \If{$u > u^*$}
    \State $u^* \gets u$;\quad $\Phi^* \gets \Phi$
  \EndIf
\EndWhile
\State \Return $\Phi^*$
\end{algorithmic}
\end{algorithm}
\section{Experiments}\label{sec: experiment}

In this section, we conduct experiments to answer the following research questions:
(1) the effectiveness of \method on two representative table-centric tasks, table Q\&A (Sec.~\ref{sec: effectiveness qa}) and training data enrichment (Sec.~\ref{sec: effectiveness enrichment});
(2) the efficiency and scalability of \method (Sec.~\ref{sec:efficiency study});
(3) the contribution of key components to the overall effectiveness of \method (Sec.~\ref{sec: ablation});
(4) the impact of hyperparameters on \method's performance (Sec.~\ref{sec: hyperparameters});
(5) the practical applicability of \method through a case study showcasing how it identifies interconnected tables that satisfy the user intent (Sec.~\ref{sec: case study}).

\subsection{Experiment Setup}\label{sec: setup}

\subsection{Effectiveness on Table Q\&A}\label{sec: effectiveness qa}

In this task, a user poses an open-ended natural-language question and no designated base table is provided; the retriever must assemble the complete evidence set from the lake alone.

\noindent\textbf{Benchmarks.} Following~\citep{chen-etal-2024-table}, we adapt two Text2SQL datasets, Spider~\citep{yu2018spider} and BIRD~\citep{li2023bird}, to the tabular data lake setting: tables from all databases are pooled into a single uncurated lake with primary- and foreign-key constraints stripped, and selected tables are further partitioned row-wise into random shards to introduce unionable pairs. Each retained NL query is paired with the gold set of tables required by its ground-truth SQL. Dataset statistics are summarized in Table~\ref{tab: dataset statistics}.

\noindent\textbf{Baselines.} We compare against three categories of retrievers. (i) \emph{Point-wise retrieval}: DTR~\citep{wang-etal-2022-table} and Pneuma~\citep{balaka2025pneuma}, which score each table independently against the query. (ii) \emph{Greedy expansion}: MTR~\citep{wu2025mmqa}, which decomposes the query into LLM-generated sub-questions and iteratively retrieves one table per sub-question. (iii) \emph{Structure-aware retrieval}: JAR~\citep{chen-etal-2024-table}, which re-ranks a base retriever's output via a mixed-integer program (MIP) over relevance and joinability. The original JAR is defined only over the join setting. Thus, we extend it to handle unions by augmenting its joinability constraint with unionability edges, so that the comparison is fair under our setting that contains both join and union operators. Since the exact MIP is intractable on candidate graphs containing thousands of tables, we apply the standard LP relaxation~\citep{wolsey2020integer}.

\noindent\textbf{Metrics.} We report retrieval Recall, Precision, and F1 at budget $L$, together with \emph{Sufficiency} (Suf), a binary per-query metric that equals $1$ if the retrieved set contains every gold table required to answer the query, and $0$ otherwise. Sufficiency isolates retrieval quality from the stochasticity of the downstream executor~\citep{RAG_survey} and serves as an executor-agnostic upper bound on end-to-end accuracy.

\noindent\textbf{Implementation.} All experiments run on an Intel i7-13700KF CPU and a single NVIDIA RTX 4090 GPU under Python 3.10 / PyTorch 2.2 / CUDA 12.1. Intent graphs are produced by a single GPT-5-mini call per query at temperature $0$ with the structured-JSON prompt in Appendix~\ref{app:prompt}. The data lake graph $\mathcal{G}_\mathcal{D}$ is precomputed by encoding each column with \texttt{bge-large-en-v1.5} on the header and a $10$-row sample, with joinability scored as in DeepJoin~\citep{dong2023deepjoin} and unionability scored as in Starmie~\citep{fan2023semantics}. The shared encoder $\mathrm{emb}(\cdot)$ is initialized from the same backbone and fine-tuned end-to-end onto a $256$-dimensional retrieval space. The state encoder $\mathrm{enc}(s)$ is a three-layer heterogeneous graph attention network~\citep{velickovic2018gat} over $\mathcal{G}_I \sqcup H_{\Phi_t}$ with a cross-graph alignment head, and the factored head $g_\omega$ is a two-layer MLP (hidden $512$, GELU), yielding $24$M trainable parameters. We train $Q_\omega$ on $200$K self-generated trajectories produced by the centrality-aware sampler of Sec.~\ref{sec:rl}, using IQL~\citep{kostrikov2022iql} with expectile $\tau = 0.7$, AWR temperature $\beta = 3.0$, AdamW~\citep{loshchilov2019adamw} at learning rate $3{\times}10^{-4}$ and weight decay $10^{-2}$, batch size $256$, gradient clipping at $\ell_2$ norm $1.0$, cosine schedule with $5\%$ warmup, $200$K steps, Polyak coefficient $\rho = 5{\times}10^{-3}$, and discount $\gamma = 0.8$ (Sec.~\ref{sec: hyperparameters}). 

\noindent\textbf{Results.} Tables~\ref{tab:res_spider} and~\ref{tab:res_bird} summarize retrieval performance on Spider and BIRD at $L \in \{5, 10, 15\}$, with all numbers reported as the mean over $10$ independent runs. We highlight three observations.

\begin{itemize}[leftmargin=*,noitemsep,topsep=0pt]
\item \method consistently achieves the best result on every metric. Averaged across the six $(L, \text{dataset})$ configurations, \method improves on the strongest baseline (JAR) by $7.8\%$ in relative F1 and $10.6\%$ in relative Sufficiency, and remains the top entry in every column of Tables~\ref{tab:res_spider} and~\ref{tab:res_bird}.

\item Point-wise and greedy-expansion baselines remain competitive on F1 but degrade markedly on Sufficiency. At $L=15$, the relative Sufficiency shortfall against JAR is roughly twice as large as the corresponding F1 shortfall across DTR, Pneuma, and MTR on both benchmarks. We attribute this discrepancy to \emph{bridge omission}: complex queries require multi-hop reasoning across tables with little surface overlap with the query (e.g., a club-to-player identifier mapping), and methods that score tables independently (DTR, Pneuma) or through independent sub-questions (MTR) systematically fail to surface such tables. F1 averages over partial coverage and absorbs individual omissions, whereas Sufficiency requires every bridge to be retrieved.

\item \method jointly optimizes relevance, structure, and diversity rather than trading them off. Whereas point-wise baselines attain high Precision at the cost of Recall and JAR attains higher Recall at the cost of Precision, \method improves both axes simultaneously across all $L$. This behavior is consistent with the  log-determinant coupling per-atom relevance with set-level diversity in a single objective.
\end{itemize}

\begin{table}[t]
\centering
\caption{Statistics of \spider\ and \bird.}
\label{tab: dataset statistics}
\begin{tabular}{lrrrr}
\toprule
\textbf{Dataset} & \textbf{\#Queries} & \textbf{\#Tables} & \textbf{\#Columns} & \textbf{\#Rows} \\
\midrule
\spider & 100 & 5,673 & 38,529 & 1,841,201 \\
\bird   & 100 & 2,753 & 10,254 & 130,421,243 \\
\open   & 20  & 11,345 & 89,783 & 243,721,394  \\ 
\bottomrule
\end{tabular}
\end{table}

\begin{table*}[t]
\centering
\caption{Retrieval effectiveness on the \spider\ data lake at budgets $L \in \{5, 10, 15\}$. The best result in each column is in \textbf{bold}.}
\label{tab:res_spider}
\resizebox{\textwidth}{!}{
\begin{tabular}{@{}l|cccc|cccc|cccc@{}}
\toprule
\multirow{2}{*}{\textbf{Methods}} & \multicolumn{4}{c|}{\textbf{$L=5$}} & \multicolumn{4}{c|}{\textbf{$L=10$}} & \multicolumn{4}{c}{\textbf{$L=15$}} \\ \cmidrule(l){2-13}
 & \textbf{Recall} & \textbf{Precision} & \textbf{F1} & \textbf{Suf} & \textbf{Recall} & \textbf{Precision} & \textbf{F1} & \textbf{Suf} & \textbf{Recall} & \textbf{Precision} & \textbf{F1} & \textbf{Suf} \\ \midrule
DTR              & 0.485 & 0.612 & 0.541 & 0.310 & 0.582 & 0.475 & 0.523 & 0.440 & 0.648 & 0.354 & 0.458 & 0.510 \\
Pneuma           & 0.526 & 0.685 & 0.595 & 0.340 & 0.625 & 0.532 & 0.575 & 0.450 & 0.692 & 0.398 & 0.505 & 0.540 \\
MTR              & 0.578 & 0.594 & 0.586 & 0.430 & 0.694 & 0.486 & 0.572 & 0.520 & 0.762 & 0.359 & 0.488 & 0.590 \\
JAR              & 0.612 & 0.668 & 0.639 & 0.520 & 0.738 & 0.541 & 0.624 & 0.640 & 0.812 & 0.412 & 0.547 & 0.720 \\ \midrule
\textbf{\method} & \textbf{0.665} & \textbf{0.708} & \textbf{0.685} & \textbf{0.575} & \textbf{0.795} & \textbf{0.578} & \textbf{0.668} & \textbf{0.700} & \textbf{0.852} & \textbf{0.443} & \textbf{0.583} & \textbf{0.800} \\ \bottomrule
\end{tabular}}
\end{table*}

\begin{table*}[t]
\centering
\caption{Retrieval effectiveness on the \bird\ data lake at budgets $L \in \{5, 10, 15\}$. The best result in each column is in \textbf{bold}.}
\label{tab:res_bird}
\resizebox{\textwidth}{!}{
\begin{tabular}{@{}l|cccc|cccc|cccc@{}}
\toprule
\multirow{2}{*}{\textbf{Methods}} & \multicolumn{4}{c|}{\textbf{$L=5$}} & \multicolumn{4}{c|}{\textbf{$L=10$}} & \multicolumn{4}{c}{\textbf{$L=15$}} \\ \cmidrule(l){2-13}
 & \textbf{Recall} & \textbf{Precision} & \textbf{F1} & \textbf{Suf} & \textbf{Recall} & \textbf{Precision} & \textbf{F1} & \textbf{Suf} & \textbf{Recall} & \textbf{Precision} & \textbf{F1} & \textbf{Suf} \\ \midrule
DTR              & 0.380 & 0.485 & 0.426 & 0.220 & 0.482 & 0.378 & 0.424 & 0.310 & 0.554 & 0.272 & 0.365 & 0.380 \\
Pneuma           & 0.418 & 0.532 & 0.468 & 0.270 & 0.512 & 0.412 & 0.457 & 0.340 & 0.582 & 0.305 & 0.401 & 0.420 \\
MTR              & 0.452 & 0.475 & 0.463 & 0.340 & 0.562 & 0.394 & 0.463 & 0.420 & 0.642 & 0.305 & 0.413 & 0.530 \\
JAR              & 0.498 & 0.518 & 0.508 & 0.460 & 0.605 & 0.432 & 0.504 & 0.520 & 0.685 & 0.342 & 0.457 & 0.600 \\ \midrule
\textbf{\method} & \textbf{0.548} & \textbf{0.560} & \textbf{0.551} & \textbf{0.480} & \textbf{0.665} & \textbf{0.475} & \textbf{0.554} & \textbf{0.585} & \textbf{0.745} & \textbf{0.380} & \textbf{0.503} & \textbf{0.670} \\ \bottomrule
\end{tabular}}
\end{table*}

\subsection{Effectiveness on Training Data Enrichment}\label{sec: effectiveness enrichment}

In this task, a user starts with a base table containing a target attribute for prediction and aims to retrieve additional attributes and instances from the data lake to improve prediction accuracy.

\noindent\textbf{Benchmarks.} We construct an evaluation dataset from the table search corpus \open~\citep{deng2024lakebench,bogatu2020dataset}. We carefully select $20$ large tables: $10$ tailored for regression problems and $10$ for classification problems. Following a common setup~\citep{khatiwada2023santos,deng2024lakebench}, we split the selected tables into several unionable and joinable tables. For each original table, the shard containing the prediction target is designated as the base table, while the remaining shards are mixed into the data lake, which in total contains $11{,}345$ tables with $89{,}783$ columns and $243{,}721{,}394$ rows. Since the base tables do not include predefined NL statements, we define a template for this purpose: ``Please find the tables that can contribute to the prediction of the [target attribute] in the following table.'' Alternatively, the template can specify features explicitly, such as: ``Please find the tables that can contribute to the prediction of the [target attribute] in the following table, such as [feature name 1], [feature name 2], ...'', where the placeholders for the target attribute and feature names are filled accordingly. To enhance query diversity and naturalness, we further employ GPT-5 to rewrite the templated queries, yielding one enrichment query per base table ($20$ in total).

\noindent\textbf{End-to-end metrics.} For end-to-end performance, we integrate the retrieved tables into the base table via the discovered join and union operations, train the same downstream predictor on the enriched table for every compared method, and report \rmse (lower is better) for the $10$ regression tasks and \accu (higher is better) for the $10$ classification tasks.

\noindent\textbf{Results.} Tables~\ref{tab:res_regression} and~\ref{tab:res_classification} report retrieval and downstream performance at $L \in \{5, 10, 15\}$, with all numbers reported as the mean over $10$ independent runs. The dynamics shift notably compared to table Q\&A.

\begin{itemize}[leftmargin=*,noitemsep,topsep=0pt]
\item \method again leads every column, and its advantage is largest on the downstream metrics: it attains the lowest RMSE ($3.65$ at $L=15$) and the highest Accuracy ($0.748$), improving on JAR by $9.2\%$ in relative F1 at $L=10$. JAR remains the strongest baseline on retrieval, but its linear relevance-plus-connectivity objective admits redundant shards that project to the same underlying columns; the resulting enriched tables grow without adding information, and its downstream gains flatten accordingly. Point-wise baselines (DTR, Pneuma) fail to find enough structurally compatible tables to meaningfully expand the feature space or instance count, which caps their impact on the downstream model.

\item Enrichment rewards unionability far more than Q\&A does. Appending unionable shards directly increases the volume of training instances, which improves model stability and generalization, whereas Q\&A predominantly exercises join paths between entities. Baselines inherit a fixed bias: MTR's sub-question decomposition rarely names the row-partitioned shards of the base population, and JAR's objective originates in the join setting even after our union extension. \method resolves unions at the homomorphism level, in which a single intent tree may be witnessed by several unioned shards, and therefore adapts to the differing structural needs of the two tasks without task-specific tuning.

\item High recall only helps when it is non-redundant. Indiscriminately maximizing coverage injects spuriously correlated attributes into the training table, and the downstream model overfits to this noise; JAR and MTR both exhibit this pattern at $L=15$, where their Recall rises but RMSE and Accuracy barely move. The covariance term of \metric suppresses evidence units with overlapping column support, so \method's enriched tables stay voluminous yet non-redundant, which is precisely the regime in which RMSE and Accuracy keep improving with $L$.
\end{itemize}

\begin{table*}[t]
\centering
\caption{Retrieval and downstream effectiveness on training data enrichment (\open, regression tasks) at budgets $L \in \{5, 10, 15\}$. The best result in each column is in \textbf{bold}.}
\label{tab:res_regression}
\resizebox{\textwidth}{!}{
\begin{tabular}{@{}l|cccc|cccc|cccc@{}}
\toprule
\multirow{2}{*}{\textbf{Methods}} & \multicolumn{4}{c|}{\textbf{$L=5$}} & \multicolumn{4}{c|}{\textbf{$L=10$}} & \multicolumn{4}{c}{\textbf{$L=15$}} \\ \cmidrule(l){2-13}
 & \textbf{Recall} & \textbf{Precision} & \textbf{F1} & \textbf{RMSE $\downarrow$} & \textbf{Recall} & \textbf{Precision} & \textbf{F1} & \textbf{RMSE $\downarrow$} & \textbf{Recall} & \textbf{Precision} & \textbf{F1} & \textbf{RMSE $\downarrow$} \\ \midrule
DTR              & 0.412 & 0.520 & 0.460 & 4.66 & 0.512 & 0.415 & 0.458 & 4.49 & 0.582 & 0.312 & 0.406 & 4.36 \\
Pneuma           & 0.448 & 0.582 & 0.506 & 4.52 & 0.555 & 0.462 & 0.504 & 4.33 & 0.622 & 0.348 & 0.446 & 4.20 \\
MTR              & 0.485 & 0.515 & 0.500 & 4.38 & 0.605 & 0.428 & 0.501 & 4.12 & 0.678 & 0.335 & 0.448 & 4.01 \\
JAR              & 0.535 & 0.560 & 0.547 & 4.15 & 0.662 & 0.475 & 0.553 & 3.94 & 0.738 & 0.372 & 0.495 & 3.82 \\ \midrule
\textbf{\method} & \textbf{0.598} & \textbf{0.622} & \textbf{0.610} & \textbf{3.92} & \textbf{0.725} & \textbf{0.518} & \textbf{0.604} & \textbf{3.78} & \textbf{0.802} & \textbf{0.412} & \textbf{0.544} & \textbf{3.65} \\ \bottomrule
\end{tabular}}
\end{table*}

\begin{table*}[t]
\centering
\caption{Retrieval and downstream effectiveness on training data enrichment (\open, classification tasks) at budgets $L \in \{5, 10, 15\}$. The best result in each column is in \textbf{bold}.}
\label{tab:res_classification}
\resizebox{\textwidth}{!}{
\begin{tabular}{@{}l|cccc|cccc|cccc@{}}
\toprule
\multirow{2}{*}{\textbf{Methods}} & \multicolumn{4}{c|}{\textbf{$L=5$}} & \multicolumn{4}{c|}{\textbf{$L=10$}} & \multicolumn{4}{c}{\textbf{$L=15$}} \\ \cmidrule(l){2-13}
 & \textbf{Recall} & \textbf{Precision} & \textbf{F1} & \textbf{Acc.\ $\uparrow$} & \textbf{Recall} & \textbf{Precision} & \textbf{F1} & \textbf{Acc.\ $\uparrow$} & \textbf{Recall} & \textbf{Precision} & \textbf{F1} & \textbf{Acc.\ $\uparrow$} \\ \midrule
DTR              & 0.398 & 0.505 & 0.445 & 0.641 & 0.498 & 0.402 & 0.445 & 0.652 & 0.565 & 0.302 & 0.394 & 0.663 \\
Pneuma           & 0.435 & 0.568 & 0.493 & 0.657 & 0.540 & 0.448 & 0.490 & 0.670 & 0.608 & 0.338 & 0.434 & 0.681 \\
MTR              & 0.472 & 0.500 & 0.486 & 0.668 & 0.588 & 0.415 & 0.487 & 0.682 & 0.662 & 0.322 & 0.433 & 0.695 \\
JAR              & 0.520 & 0.545 & 0.532 & 0.688 & 0.645 & 0.462 & 0.538 & 0.701 & 0.722 & 0.360 & 0.480 & 0.712 \\ \midrule
\textbf{\method} & \textbf{0.582} & \textbf{0.605} & \textbf{0.593} & \textbf{0.716} & \textbf{0.708} & \textbf{0.505} & \textbf{0.590} & \textbf{0.734} & \textbf{0.785} & \textbf{0.402} & \textbf{0.532} & \textbf{0.748} \\ \bottomrule
\end{tabular}}
\end{table*}

\subsection{Efficiency and Scalability}\label{sec:efficiency study}

Table~\ref{tab:efficiency} reports mean per-query wall-clock latency on Spider at $L=10$. Point-wise retrievers are the most efficient: DTR runs in $0.4$\,s through a single pass of dense vector lookup, and Pneuma takes $1.8$\,s due to its LLM-based reranker. The greedy-expansion baseline MTR costs $4.0$\,s, dominated by LLM-driven sub-question decomposition and per-sub-question retrieval. JAR is the slowest competing baseline at $22.4$\,s, as its mixed-integer program over a collection of thousands of tables remains expensive even with LP relaxation. We additionally evaluate an ablation \method-Exact that retains \method's anchor reachability but replaces the value-guided beam search with exact enumeration of $\mathcal{U}_{\text{\metric}}$ over the candidate region. This ablation requires $93.7$\,s per query, while exact enumeration over the unrestricted candidate space is intractable to time on either benchmark and is therefore omitted. \method itself completes in $3.5$\,s, on par with MTR and substantially faster than every structure-aware alternative, which demonstrates that the learned value function together with lazy reranking absorbs the cost of structure-aware search.

\begin{table}[t]
\centering
\small
\caption{Mean per-query latency (seconds) on Spider at $L=10$.}
\label{tab:efficiency}
\resizebox{\columnwidth}{!}{%
\begin{tabular}{@{}lcccccc@{}}
\toprule
Method & DTR & Pneuma & MTR & JAR & \method-Exact & \textbf{\method} \\
\midrule
Latency (s) & 0.4 & 1.8 & 4.0 & 22.4 & 93.7 & \textbf{3.5} \\
\bottomrule
\end{tabular}}
\end{table}

\subsection{Ablation Study}\label{sec: ablation}
We isolate the contribution of three core design choices in \method: the \metric reward, the trajectory sampler, and the intent extractor. All studies are run on Spider at $L = 10$.

\begin{table}[t]
\centering
\small
\caption{Reward-component ablation on Spider.}
\label{tab:abl_reward}
\begin{tabular}{@{}lc@{}}
\toprule
Variant                  & F1 \\
\midrule
\textbf{\method (full)}  & \textbf{0.668} \\
R1 linear combination    & 0.582 \\
R2 no redundancy         & 0.612 \\
R3 no GMRF coupling      & 0.598 \\
R4 additive (no submod.) & 0.625 \\
\bottomrule
\end{tabular}
\end{table}
\paragraph{Reward components.} Table~\ref{tab:abl_reward} replaces $\mathcal{U}_{\text{\metric}}$ in turn with four surrogates that each remove one property of the reward: (R1) a linear combination of relevance, coverage, and redundancy in place of the log-determinant; (R2) no off-diagonal redundancy suppression; (R3) no GMRF coupling between relation atoms and node atoms; (R4) an additive surrogate that breaks submodular monotonicity. Each surrogate degrades F1 by a distinct, non-substitutable share, with R1 and R3 producing the largest drops. The decomposition confirms that the four properties of $\mathcal{U}_{\text{\metric}}$ (Sec.~\ref{sec:metric}) are individually load-bearing rather than redundant.

\begin{table}[t]
\centering
\small
\caption{Trajectory-sampler ablation on Spider.}
\label{tab:abl_sampler}
\begin{tabular}{@{}lcc@{}}
\toprule
Sampler                    & MAPE (\%) & F1 \\
\midrule
Uniform random walk        & 18.2 & 0.610 \\
Anchor-only                & 13.5 & 0.642 \\
\textbf{Centrality (full)} & \textbf{8.4} & \textbf{0.668} \\
\bottomrule
\end{tabular}
\end{table}
\paragraph{Trajectory sampler.} Table~\ref{tab:abl_sampler} compares the two-stage centrality-aware sampler against a uniform random walk over $\mathcal{G}_\mathcal{D}$ and an anchor-only variant that omits the centrality bias. The uniform sampler raises the held-out value-prediction error of $Q_\omega$ from $8.4\%$ to $18.2\%$ MAPE and reduces end-to-end F1 by nearly $6$ absolute points. The anchor-only variant recovers part of the gap but still trails the full sampler. The centrality stage biases anchor selection toward tables that participate in many ground-truth syntheses, so the resulting trajectories cover the regions of the action space that $Q_\omega$ must score accurately at inference time.

\begin{table}[t]
\centering
\small
\caption{Intent-extractor ablation on Spider.}
\label{tab:abl_extractor}
\begin{tabular}{@{}lccc@{}}
\toprule
LLM & Ext.\ F1 & End F1 & Cost \\
\midrule
GPT-5               & 0.792 & 0.671 & $9.3\times$ \\
\textbf{GPT-5-mini} & \textbf{0.774} & \textbf{0.668} & \textbf{$1\times$} \\
Llama-3.1-70B       & 0.682 & 0.605 & free \\
Qwen3-7B            & 0.541 & 0.502 & free \\
\bottomrule
\end{tabular}
\end{table}
\paragraph{Intent extractor.} Table~\ref{tab:abl_extractor} reports extraction quality and downstream F1 for four candidate extractors. GPT-5-mini achieves an extraction $F_1$ of $0.774$ against gold annotations, slightly below GPT-5, at roughly $1/9$ the per-query cost. The downstream F1 of \method differs by less than $0.3$ absolute points between the two, which justifies GPT-5-mini as the default. Smaller open-source extractors degrade end-to-end F1 substantially, indicating that the intent-graph schema demands a frontier-class extractor.

\subsection{Hyperparameter Study}\label{sec: hyperparameters}
\noindent\textbf{Number of trajectories $T$.} Figure~\ref{fig:hp_extra}(a) reports F1 as a function of the number of self-generated trajectories used in offline IQL training. F1 rises sharply up to $T = 200$K and plateaus thereafter, with the gain between $T = 200$K and $T = 1$M smaller than $0.4$ absolute points. We adopt $T = 200$K as the default, which sits at the elbow of the saturation curve and avoids the additional GPU-hours required to generate larger trajectory pools.

\noindent\textbf{Discount factor $\gamma$.} Although the synthesis MDP has a bounded horizon and a terminal reward, a moderate discount empirically improves the value function's training stability and the resulting end-to-end F1. Figure~\ref{fig:hp_extra}(b) sweeps $\gamma$ over $[0, 1]$ at $0.1$ resolution and shows that F1 is unimodal in $\gamma$ with a peak at $\gamma = 0.8$. The drop at $\gamma = 1$ reflects that value targets become heavy-tailed when no discount is applied, while small $\gamma$ underweights the terminal $\mathcal{U}_{\text{\metric}}$ that the policy ultimately optimizes. We therefore adopt $\gamma = 0.8$ as the default.

\begin{figure}[t]
\centering
\includegraphics[width=\linewidth]{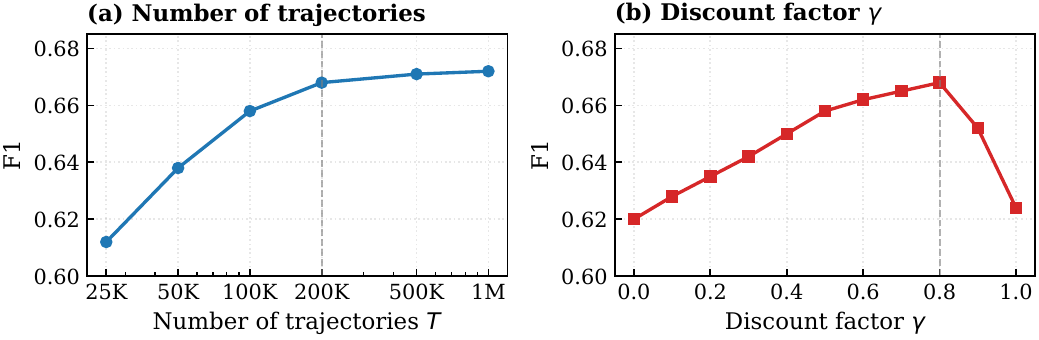}
\caption{Hyperparameter sensitivity of \method on Spider at $L=10$. (a) F1 versus the number of self-generated trajectories $T$. (b) F1 versus the discount factor $\gamma$. Defaults $T=200$K and $\gamma=0.8$ are marked by the dashed vertical lines.}
\Description{Two line charts showing F1 saturating beyond 200K trajectories and peaking at a discount factor of 0.8.}
\label{fig:hp_extra}
\end{figure}

\subsection{Case Study}\label{sec: case study}
To demonstrate how \method resolves complex, open-ended analytical queries, consider: \textit{``What is the average points of players from the club with name `AIB'?''}, whose intent graph is the three-tree extraction shown as the final in-context example of Appendix~\ref{app:prompt}. As illustrated in Fig.~\ref{fig:case_study}, answering it requires discovering and synthesizing four fragmented tables from scratch: $T_1$ (mapping club names to club identifiers), $T_2$ (mapping club identifiers to player identifiers), and $T_3, T_4$ (two row-partitioned shards of the game records containing points). Note that explicit schema headers were removed during the experiments; they are referenced here only for illustration.

\begin{figure}[t]
    \centering
\includegraphics[width=0.9\linewidth]{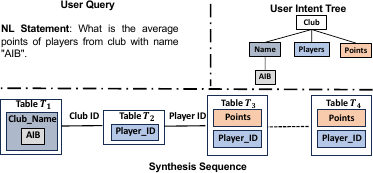}
    \caption{An example from the \spider dataset for table Q\&A, where solid and dash lines represent join and union relationships, respectively.}
    \Description{Four tables connected by join and union edges: a club table joins a club-to-player bridge table, which joins two unionable game-record shards.}
    \label{fig:case_study}
\end{figure}

The baselines fail due to structural or semantic blind spots. The point-wise Pneuma retrieves $T_1$, $T_3$, and $T_4$ based on surface overlap with the query but misses the crucial bridging table $T_2$, which contains only identifiers and bears no semantic relevance to the query; without $T_2$, the retrieved context is structurally disconnected. MTR decomposes the query into sub-questions about clubs and points, yet no sub-question names the identifier mapping, so $T_2$ again never surfaces. JAR, whose objective rewards relevance and connectivity but not diversity, does recover the join path $T_1 \bowtie T_2 \bowtie T_3$ but discards $T_4$ as a near-duplicate of $T_3$; the resulting answer undercounts the AIB players' games and distorts the average.

\method is the only method that identifies the complete synthesis. The extracted intent graph explicitly captures the hierarchical constraints (the value ``AIB'' and the attribute ``points'') together with the cross-tree join between clubs and players. Guided by this intent, the anchor-reachability initialization (Sec.~\ref{sec:online}) pinpoints $T_1$, $T_3$, and $T_4$ as anchor witnesses; the admissibility-filtered beam search then discovers $T_2$ as the bridge that realizes the cross-tree edge; and the covariance term of \metric recognizes $T_3$ and $T_4$ as complementary rather than redundant, retaining both shards. The final synthesis $T_1 \bowtie T_2 \bowtie (T_3 \cup T_4)$ resolves the structural disconnect and provides the downstream agent with exactly the evidence needed to compute the average.

\section{Related Work}\label{sec: related work}

\noindent \textbf{Table Retrieval.}
The goal of table retrieval in data lakes is to identify datasets that satisfy a user's specific analytical needs from massive, heterogeneous repositories. Existing approaches fundamentally operate under two paradigms. (1) \emph{Keyword- and semantic-based methods} (e.g., Aurum~\citep{fernandez2018aurum}, LSH Ensemble~\citep{zhu2016lshensemble}, and Google Dataset Search~\citep{brickley2019google}) retrieve tables that textually match specified query keywords or exhibit high dense vector similarity, with learned table--text encoders such as TaPas~\citep{herzig2020tapas}, TaBERT~\citep{yin2020tabert}, and Solo~\citep{wang2023solo} extending the paradigm to natural-language queries. (2) \emph{Structure-based methods} focus on discovering datasets that are structurally compatible with a user-provided base table, encompassing both \emph{joinable} table search (e.g., Josie~\citep{zhu2019josie}, DeepJoin~\citep{dong2023deepjoin}) and \emph{unionable} table search (e.g., Starmie~\citep{fan2023semantics}, SANTOS~\citep{khatiwada2023santos}).
Unlike existing methods that return isolated tables based on rigid inputs (e.g., exact keywords or a seed base table), \method interprets open-ended natural-language queries by casting retrieval as graph matching between a query-derived intent graph and the data lake graph, and returns an interconnected set of tables together with the join and union operations required to integrate them into a single coherent answer.

\noindent \textbf{Retrieval-Augmented Generation (RAG).}
RAG has emerged as the dominant paradigm for enhancing large language models (LLMs) with external knowledge, mitigating hallucinations by grounding generation in retrieved evidence~\citep{lewis2020rag}. Foundational architectures like REALM~\citep{guu2020realm} and Dense Passage Retrieval (DPR)~\citep{karpukhin2020dpr} optimize the retrieval stage using dual-encoder architectures that map queries and unstructured text documents into a shared dense vector space. On the generation side, approaches like Fusion-in-Decoder (FiD)~\citep{izacard2021fid} improve the LLM's ability to jointly process and reason over multiple retrieved passages.
Traditional tabular RAG retrieves serialized tables independently via flat semantic similarity, which misses the structural ``bridges'' that link fragmented evidence and hands the downstream agent an unjoinable context. \method instead realizes a structured RAG paradigm for tabular data lakes: it retrieves a logically interconnected evidence subgraph by jointly optimizing semantic relevance, structural compatibility, and evidence diversity in the single log-determinant objective \metric, so that the retrieved context is integrable by construction.

\section{Conclusion}
We formalized \problem as a graph-matching problem between a query-derived intent graph and a heterogeneous data lake graph, and proposed \method to solve it. \method couples two contributions: an information-theoretic reward (\metric) that unifies semantic relevance, structural compatibility, and evidence diversity in a single log-determinant objective; and an online search procedure that combines offline implicit Q-learning over the synthesis MDP with a three-stage pruning pipeline, reducing the candidate space from $\mathcal{O}(M \cdot N^{L-1})$ to $\mathcal{O}(BL)$. On Spider and BIRD, \method achieves the best Recall, Precision, F1, and Sufficiency among point-wise, greedy-expansion, and structure-aware baselines, with latency on par with greedy-expansion methods and substantially below the structure-aware MIP baseline. Natural extensions include jointly training $Q_\omega$ with a downstream executor for end-to-end synthesis, and lifting the formalism to multi-modal data lakes.

\bibliographystyle{ACM-Reference-Format}
\bibliography{reference}


\appendix

\section{Intent-Graph Extraction Prompt}
\label{app:prompt}

This appendix reproduces the full intent-graph extraction prompt issued to the LLM (GPT-5-mini in our default configuration), as described in Sec.~\ref{sec:graphs-problem}: the task description, the JSON output schema, the extraction rules, and the three in-context examples. The \texttt{\{query\}} placeholder is replaced by the user's natural-language query at inference time.

\begin{lstlisting}[basicstyle=\scriptsize\ttfamily,breaklines=true,frame=single,framerule=0.5pt,columns=fullflexible,xleftmargin=0pt]
You are an expert at decomposing analytical questions into a structured intent graph over a tabular data lake.

Given a natural-language QUERY, output a JSON object representing the intent graph as a forest of intent trees plus cross-tree join relations.

# OUTPUT SCHEMA

{
  "trees": [
    {
      "tree_id": "<unique short id>",
      "entity": "<target entity, e.g., 'Player', 'Game'>",
      "attributes": [
        {
          "attribute_id": "<unique short id>",
          "name": "<attribute, e.g., 'name', 'score', 'year'>",
          "values": ["<optional value constraints>"]
        }
      ]
    }
  ],
  "joins": [
    {
      "from_tree": "<tree_id>",
      "from_attribute": "<attribute_id in from_tree>",
      "to_tree": "<tree_id>",
      "to_attribute": "<attribute_id in to_tree>"
    }
  ]
}

# RULES

1. Use one tree per coherent data need: a single entity type together with its required attributes.
2. Each logical entity appears in at most one tree.
3. Add a join when two data needs are linked through a shared key, even if the link entity is not explicitly named in the query (introduce a
   bridge tree for it).
4. Include only entities, attributes, and values mentioned in or directly implied by the query. Do not invent unrelated attributes.
5. Do not include union relations. A single intent tree may be satisfied by multiple unioned tables at retrieval time; this is handled outside the intent graph and requires no graph annotation.
6. The "values" field is a (possibly empty) list of explicit value filters from the query. Leave it empty if the attribute is requested without a constraint.
7. Output valid JSON only. Do not include explanations, markdown fences, or any other text outside the JSON object.

# EXAMPLES

## Example 1 (single tree, value constraint).

QUERY: "List the top 5 universities in California by enrollment."

OUTPUT:
{
  "trees": [
    {
      "tree_id": "t1",
      "entity": "University",
      "attributes": [
        {"attribute_id": "a1", "name": "name", "values": []},
        {"attribute_id": "a2", "name": "state", "values": ["California"]},
        {"attribute_id": "a3", "name": "enrollment", "values": []}
      ]
    }
  ],
  "joins": []
}

## Example 2 (two trees, single join).

QUERY: "What is the average score of female players in basketball games during 2020?"

OUTPUT:
{
  "trees": [
    {
      "tree_id": "t1",
      "entity": "Player",
      "attributes": [
        {"attribute_id": "a1", "name": "id", "values": []},
        {"attribute_id": "a2", "name": "gender", "values": ["female"]}
      ]
    },
    {
      "tree_id": "t2",
      "entity": "Game",
      "attributes": [
        {"attribute_id": "a3", "name": "player_id", "values": []},
        {"attribute_id": "a4", "name": "score", "values": []},
        {"attribute_id": "a5", "name": "year", "values": ["2020"]},
        {"attribute_id": "a6", "name": "sport", "values": ["basketball"]}
      ]
    }
  ],
  "joins": [
    {"from_tree": "t1", "from_attribute": "a1",
     "to_tree": "t2", "to_attribute": "a3"}
  ]
}

## Example 3 (three trees, two joins, bridge entity).

QUERY: "What is the average points of AIB club players across matches?"

OUTPUT:
{
  "trees": [
    {
      "tree_id": "t1",
      "entity": "Club",
      "attributes": [
        {"attribute_id": "a1", "name": "id", "values": []},
        {"attribute_id": "a2", "name": "name", "values": ["AIB"]}
      ]
    },
    {
      "tree_id": "t2",
      "entity": "Player",
      "attributes": [
        {"attribute_id": "a3", "name": "id", "values": []},
        {"attribute_id": "a4", "name": "club_id", "values": []}
      ]
    },
    {
      "tree_id": "t3",
      "entity": "Match",
      "attributes": [
        {"attribute_id": "a5", "name": "player_id", "values": []},
        {"attribute_id": "a6", "name": "points", "values": []}
      ]
    }
  ],
  "joins": [
    {"from_tree": "t1", "from_attribute": "a1",
     "to_tree": "t2", "to_attribute": "a4"},
    {"from_tree": "t2", "from_attribute": "a3",
     "to_tree": "t3", "to_attribute": "a5"}
  ]
}

# QUERY

{query}

# OUTPUT
\end{lstlisting}

\end{document}